\let\csname equation*\endcsname\relax
\let\csname endequation*\endcsname\relax
\theoremstyle{definition}
\algrenewcommand{\Return}{\State\algorithmicreturn~}
\newtheorem{prop}{Proposition}
\newcommand\norm[1]{\left\lVert#1\right\rVert}
\algnewcommand{\LineComment}[1]{\State \(\triangleright\) #1}
\renewcommand*\env@matrix[1][*\c@MaxMatrixCols c]{%
	\hskip -\arraycolsep
	\let\@ifnextchar\new@ifnextchar
	\array{#1}}
 \algnewcommand{\Initialize}[1]{%
  \State \textbf{Initialize:}
  \Statex \hspace*{\algorithmicindent}\parbox[t]{.8\linewidth}{\raggedright #1}
}
\NewDocumentCommand{\INTERVALINNARDS}{ m m }{
    #1 {,} #2
}
\NewDocumentCommand{\interval}{ s m >{\SplitArgument{1}{,}}m m o }{
    \IfBooleanTF{#1}{
        \left#2 \INTERVALINNARDS #3 \right#4
    }{
        \IfValueTF{#5}{
            #5{#2} \INTERVALINNARDS #3 #5{#4}
        }{
            #2 \INTERVALINNARDS #3 #4
        }
    }
}
\begin{document}
\title[Continuous-variable ramp quantum  secret sharing with Gaussian states and operations]{Continuous-variable ramp quantum  secret sharing \\
with Gaussian states and operations}
\author{%
Masoud Habibidavijani $^{1}$
and Barry C. Sanders $^{1,2}$
}
\address{$^1$%
Institute for Quantum Science and Technology,
University of Calgary,
Alberta T2N 1N4, Canada%
}
\address{$^2$%
Program in Quantum Information Science, Canadian Institute for Advanced Research,Toronto, Ontario M5G 1Z8, Canada
}
\begin{abstract}								
Our aim is to formulate continuous-variable quantum secret sharing  as a continuous-variable ramp quantum  secret sharing  protocol, provide a certification procedure for it and explain the criteria for the certification. Here we  introduce a technique for certifying  continuous-variable ramp quantum secret-sharing schemes  in the framework of quantum interactive-proof systems. We devise pseudocodes in order to represent the sequence of steps taken to solve the certification problem. Furthermore, we derive the expression for quantum mutual information between the quantum secret extracted by any multi-player structure and the share held by the referee corresponding to the Tyc-Rowe-Sanders continuous-variable quantum secret-sharing scheme. We solve by converting the Tyc-Rowe-Sanders position representation for the state into a Wigner function from which the covariance matrix can be found, then insert the covariance matrix into the standard formula for continuous-variable quantum mutual information to obtain quantum mutual information in terms of squeezing. Our quantum mutual information result quantifies the leakage of  the ramp quantum secret-sharing schemes.	
\end{abstract}
\section{Introduction}
\label{sec:Introduction}
Secret sharing (SS) is an information theoretically secure cryptographic protocol that is applicable to online auctions, electronic voting, shared electronic banking and cooperative activation  in the classical domain~\cite{Iftene2006SecretSS}, and distributed quantum computing  in the quantum regime~\cite{gerd2007quantum}. Ramp classical~\cite{10.1007/3-540-39568-7_20,yamamoto1986secret} and quantum~\cite{PhysRevA.72.032318,PhysRevA.87.032319} secret-sharing (SS) schemes were proposed to reduce the communication complexity by the sacrifice of security conditions. Continuous-variable quantum secret sharing (CV QSS)~\cite{0305-4470-36-27-314,PhysRevA.65.042310,PhysRevLett.92.177903} has been formulated in the framework of discrete-variable quantum SS schemes~\cite{PhysRevLett.83.648}, which does not accommodate the quantum-information leakage inherent in continuous representations of quantum information.
Our aim is to formulate CV QSS  as a continuous-variable (CV) ramp quantum secret sharing (RQSS) protocol and introduce a  technique to certify the protocol. 
	
In order to reach our aims, we introduce four advances in our work. We develop the quantum mutual-information approach to the continuous-variable regime for evaluating the security of CV QSS schemes. We derive quantum mutual information between referee and any multi-player structure corresponding to the Tyc-Rowe-Sanders (TRS03) CV QSS scheme~\cite{0305-4470-36-27-314}. Furthermore, we introduce a certification technique for CV QSS in the framework of quantum-interactive proofs~\cite{kapourniotis2014verified,10.1007/978-3-540-24587-2_1,aolita2015reliable} and demonstrating the necessity of it being a RQSS scheme. Also we give an upper bound for the failure probability in terms of the number of experimental runs from which the referee knows how many rounds are required  to have sufficient information.
	
We focus on the \enquote{quantum-quantum} (QQ) SS schemes~\cite{PhysRevLett.83.648} (in which the secret is a quantum state and communication occurs over quantum channels) because the \enquote{classical quantum} (CQ) SS schemes (which is for sharing a classical message over quantum channels)~\cite{PhysRevA.59.1829,PhysRevA.59.162}, can be simulated by QKD and classical secret sharing~\cite{PhysRevA.78.042309}. The QQ case was extended to  CV regime  by Tyc and Sanders~\cite{PhysRevA.65.042310} and has been    realized experimentally for three players, any two of whom are authorized to extract the secret state~\cite{PhysRevLett.92.177903,PhysRevA.71.033814}.    Importantly, TRS03 later showed that the continuous-variable quantum state sharing could be extended to a $(k, n)$ threshold scheme (a class of QSS schemes  in which the authorized structure consists of  all groups of~$k$ or more players while there are $n$ players in total~\cite{PhysRevLett.83.648}), without a corresponding scale up in quantum resources.
	
Whereas conditional entropy is employed for evaluating the security of CC schemes, quantum mutual information is needed for the quantum case~\cite{Imai:2005:ITM:2011608.2011615}.  Quantum mutual information has been used as a means to evaluate the secrecy condition of Cleve-Gottesman-Lo QSS in the~$(2,3)$~case~\cite{Imai:2005:ITM:2011608.2011615}. TRS03 characterized  the quality of secret extraction for their scheme by calculating the fidelity in terms of squeezing parameter 
between the original and the extracted secret  for an arbitrary coherent state as the secret. However, fidelity is not a distance measure~\cite{wilde2013quantum}. 

Hence, we develop the alternative and more meaningful quantum mutual-information approach for evaluating the  CV QSS security. Restricting to Gaussian states and operations allows all the calculation to be performed within the convenient framework of the semidirect product
\begin{equation}
\label{eq:eqone}
	\text{HW}(n)\rtimes\text{Sp}\left(2n,\mathbb{R}\right),
\end{equation}
which is the continuous-variable Clifford group,
with~$\text{Sp}\left(2n,\mathbb{R}\right)$ the symplectic group
and HW$\left(2n,\mathbb{R}\right)$ the Heisenberg-Weyl group for $n$ modes~\cite{PhysRevLett.88.097904}.
This representation makes calculations tractable but ignores potentially powerful non-Gaussian operations~\cite{Lloyd2003}.

Our paper is organized as follows. In \S\ref{sec:Background}, we briefly review  the theoretical background on continuous-variable quantum information with Gaussian states and Gaussian operations, mutual information and discrete-variable  ramp  quantum  SS protocols. We detail our approach in \S\ref{sec:Approach}. The mathematical results  are presented in \S\ref{sec:Results}. We conclude with a discussion of our results in \S\ref{sec:Discussion}.

\section{Background}\label{sec:Background}
This section provides the  required context to tackle the problem which is solved in this paper. We begin the section by theoretical background on continuous-variable quantum information with Gaussian states and Gaussian operations. Then we discuss quantum mutual information, which is a necessary tool for defining and evaluating quantum SS schemes. Finally, we discuss basic results of RQSS schemes.

\subsection{Continuous-variable quantum information with Gaussian states and Gaussian operations}\label{sec:Continuous-variable quantum information with Gaussian states and Gaussian operations}

In this subsection, we begin by introducing Gaussian states~\cite{RevModPhys.84.621} and some of their important properties. Then we explain the Gaussian preserving maps, which preserve the Gaussian property of quantum states. Finally, we discuss continuous-variable quantum secret sharing based on TRS03 CV QSS scheme.
	
\subsubsection{Gaussian states}
A continuous-variable quantum state is  an continuously parameterized element of  Hilbert space described by observables with continuous eigenspectra.
Typically, a continuous-variable quantum state is described by~$n$ bosonic modes, associated with a tensor-product Hilbert space 	
\begin{equation}
\label{eq:eqtwo}
\mathscr{H}^{\otimes n}
=\bigotimes_{\substack{k=1}}^{k=n}\mathscr{H}_k	\sim \mathcal{L}^2\left(\mathbb{R}^n\right), 
\end{equation}
i.e., square integrable complex-valued functions over $\mathbb{R}^N$
and a vector of quadrature operators
\begin{equation}
\label{eq:eqthree}
	\hat{\bm{x}}\coloneqq (\hat{q}_1,\hat{p}_1,...,\hat{q}_n,\hat{p}_n)^\top
\end{equation}
for ${}^\top$ denoting transpose.
The vector $\hat{\bm{x}}$ satisfies the commutation relation
\begin{equation}\label{eq:eqfour}
\left[\hat{\bm{x}}_i,\hat{\bm{x}}_j\right]	=\Omega_{ij},\ \ \
\bm{\Omega}=\bigoplus_{k=1}^{n}\begin{pmatrix}
0&1\\
-1&0
\end{pmatrix},
\end{equation}
known as the symplectic form.

An arbitrary continuous-variable quantum state is characterized by a density operator 
\begin{equation}\label{eq:eqfive}
\rho \in \mathcal{S}(\mathscr{H}),
\end{equation}
where $\mathcal{S}(\mathscr{H})$ is the set of positive semidefinite trace-class operators.
These positive trace-class operators can be represented by the Wigner function~\cite{PhysRev.40.749} 
\begin{equation}
\label{eq:eqsix}
	W\left(\bm{x}\right)
		=\frac{1}{(2\pi)^{2n}}\int_{\mathbb{R}^{2n}}\text{d}^{2n}\bm{\xi}\exp\left(-\text{i}\bm{x}^\top\bm{\xi}\right)\chi\left(\bm{\xi}\right)
\end{equation}
for
\begin{equation}
\label{eq:eqseven}
\chi(\bm{\xi})\coloneqq\operatorname{tr}\left[ \rho\hat{D}(\bm{\xi})\right],
\end{equation}
being the the Wigner characteristic function and
\begin{equation}
\label{eq:eqeight}
\hat{D}\left(\bm{\xi}\right)\coloneqq\exp\left(\text{i}\bm{x}^\top\bm{\xi}\right),\qquad \bm{\xi}\in\mathbb{R}^{2n}
\end{equation}
being the Weyl operator.
Wigner functions are particularly useful for calculating expectation values of symmetrically ordered functions~$\hat{q}$ and~$\hat{p}$ denoted by~$S\left(\hat{q}^b\hat{p}^d \right)$, with~$S$ denoting symmetric ordering, and with expectation value
\begin{equation}\label{eq:eqnine}
\operatorname{tr}\left[\rho S\left(\hat{q}^b\hat{p}^d \right)\right]=\int \text{d}q \text{d}p\,
W\left(\bm{x}\right) q^b p^d.
\end{equation}
Thus far, we have the Wigner representation for any state;
now we restrict to Gaussian states.
	
A Gaussian state is defined to be a state whose Wigner representation is Gaussian.
A Gaussian state can be completely characterized by its first moment $\bar{\bm{x}}=\operatorname{tr}\big(\hat{\bm{x}}\rho\big)$  and covariance matrix $\bm{V}$.
The covariance matrix entries are
\begin{equation}
\label{eq:eqone0}
V_{ij}\coloneqq \frac{1}{2}\operatorname{tr} \left[\{\bm{\Delta}\hat{\bm{x}}_i,\bm{\Delta}\hat{\bm{x}}_j\}\right],\;
\bm{\Delta}\hat{\bm{x}}_i\coloneqq\hat{\bm{x}}_i-\operatorname{tr}\left(\hat{\bm{x}}_i\rho\right),\; i,j\in\{1,\ldots, 2n\},
\end{equation}
with $\{ , \}$ the anticommutator.

The symplectic manipulation of a Gaussian state's covariance matrix can be used to express its fundamental properties. By definition, a~$2n\times 2n$ real-valued matrix~$\bm{S}$ is called \textit{symplectic} if it preserves the symplectic form of Eq.~(\ref{eq:eqthree}); i.e.,
\begin{equation}\label{eq:eqone1}
\bm{S \Omega S}^\top=\bm{\Omega}.
\end{equation}
According to the Williamson theorem~\cite{simon1999congruences},
each covariance matrix $\bm{V}$ has a corresponding symplectic transformation~$\bm{S}$
satisfying 
\begin{equation}\label{eq:eqone2}
\bm{V}=\bm{S}\left[\bigoplus_{k=1}^{n}\nu_k\bm{I}_k\right]\bm{S}^\top,
\end{equation}\label{eq:13}
with symplectic spectrum defined by the vector
\begin{equation}
\label{eq:eqone3}
\bm{\nu}
\coloneqq\left(\nu_1,\ldots,\nu_n\right)
\end{equation}
unique to each~$\bm{V}$
and satisfying
\begin{equation}
\label{eq:eqtenfour}
\prod_{k=1}^{n}\nu_{k}^{2}
=\det{\bm{V}}.
\end{equation}
As an example,
a two-mode Gaussian state has covariance matrix 
\begin{equation} 
\label{eq:eqtenfive}
\bm{V}=\begin{pmatrix}
\bm{A}&\bm{C}\\
\bm{C}&\bm{B}
\end{pmatrix};\;
\bm{A}=\bm{A}^\top,
\bm{B}=\bm{B}^\top,
\bm{C}\in\mathbb{R}^{2\times2}.
\end{equation}
The symplectic spectrum is~\cite{0953-4075-37-2-L02}
\begin{equation}\label{eq:eqonesix}
\nu_\pm=\sqrt{\frac{\Delta\pm\sqrt{\Delta^2-4\det \bm{V}}}{2}},
\end{equation}
where
\begin{equation}
\label{eq:eqoneseven}
\Delta\coloneqq \det{\bm{A}}+\det{\bm{B}}+2\det{\bm{C}}.
\end{equation}
As Gaussian states are easy to describe mathematically, a large class of transformations acting on such states   are easy to characterize as well.
In the next section, we discuss this class of transformations called Gaussian preserving maps.
	
\subsubsection{Gaussian-preserving maps}\label{sec:Gpm}
Gaussian (linear) unitary Bogoliubov transformations are	interactions that preserve the Gaussian character of a quantum state. In terms of the quadrature operators, a Gaussian map is described by the affine map
\begin{equation}
\label{eq:eqteneight}
\left(\bm{S},\bm{d}\right): \bm{S}\hat{\bm{x}}+\bm{d},\qquad \bm{d} \in \mathbb{R}^{2n},
\end{equation}
where~$\bm{S}$~(\ref{eq:eqone0}) is the matrix representation of the symplectic group. The most general form of a Gaussian map in terms of its action on the statistical moments $\bar{\bm{x}}$ and $\bm{V}$ is
\begin{equation}
\label{eq:eqtennine}
\bar{\bm{x}}\mapsto\bm{S}\bar{\bm{x}}+\bm{d},\;
\bm{V}\mapsto \bm{S}\bm{V}\bm{S}^\top.
\end{equation}
A special class of Gaussian maps are linear canonical point transformations, for which the positions and momenta transform separately and do not mix~\cite{moshinsky1971linear}.

For single-mode squeezing we have the infinite-dimensional unitary representation~\cite{lvovsky2015squeezed} 
\begin{equation}
\label{eq:twenty}
S_1=\text{e}^{\frac{1}{2}\left(\zeta^\star\hat{a}^2-\zeta\hat{a}^{\dagger 2} \right)},
\end{equation}
and for two-mode squeezing we have the infinite-dimensional unitary representation 
\begin{equation}
\label{eq:twone}
S_2=\text{e}^{\frac{1}{2}\left(\zeta^\star\hat{a}_1\hat{a}_2-\zeta\hat{a}_1^{\dagger}\hat{a}_2^{\dagger}\right)},
\end{equation}
where
\begin{equation}
\label{eq:twtwo}
\hat{a}_k=\frac{\hat{q}_k+\text{i}\hat{p}_k}{\sqrt{2}},\;
\hat{a}_k^\dagger=\frac{\hat{q}_k-\text{i}\hat{p}_k}{\sqrt{2}},\;
\zeta=s\text{e}^{\text{i}\theta},\;
s\in\mathbb{R}^+.
\end{equation}
A two-mode squeezed vacuum (TMSV) state is mathematically represented as~\cite{lvovsky2015squeezed}
\begin{equation}
\label{eq:twonezeta}
\ket{\zeta}_\text{TMSV}	
:=S_2\left(\zeta\right)\ket{0},\;
\zeta\in\mathbb{C}.
\end{equation}
In the next section, we explain TRS03 continuous-variable quantum SS scheme in which the Gaussian maps are used for encoding and decoding.

\subsubsection{Continuous-variable quantum secret sharing}
In this subsection, we explain the TRS03 CV QSS scheme. In a $\left(k,2k-1\right)$-threshold scheme, the dealer possesses a pure secret state $\ket{\psi}\in\mathscr{H}$ and encodes the quantum secret into an entangled state of $2k-1$ modes of the electromagnetic field by combining it with $2k-2$ ancillary states. The dealer then distributes them among the $n$ players, each of whom receive one share, and at least~$k$ players must combine their shares in an active interferometer to extract the secret state.
	
Let $\mathscr{H}^{(2k-1)}$ be the tensor product of $2k-1$ copies of $\mathscr{H}^{(1)}$ and each player owns one of these copies.
Let us define $\mathbb{F}^{2k-1}$ as the real linear space of coordinate functions for
$\mathbb{R}^{2k-1}$.
Then a system of Euclidean coordinates
\begin{equation}\label{eq:xinr}
\bm{x}=\left(x_1, x_2,\ldots,x_{2k-1}\right)^\top\in \mathbb{R}^{2k-1},
\end{equation}
is equivalent to choosing an orthonormal basis
of coordinate functions
\begin{equation}
	\bm{f}
		:=\left(f_1,f_2,\ldots,f_{2k-1}\right)^\top\in\mathbb{F}^{2k-1}
\end{equation}
such that 
\begin{align}
	f_i:\mathbb{R}^{2k-1}\to\mathbb{R}:
		\left(\bm{x}\right)=x_i
\end{align}
with~$x_i$ the $i^\text{th}$ coordinate of~$\bm{x}$~(\ref{eq:xinr}),
and $f_i\cdot f_j=\delta_{ij}$.
	
Initially, the dealer starts with an unentangled tensor product 
\begin{equation}
\ket{\Psi}
=\ket{\psi}\otimes \underbrace{\ket{\phi_a}\otimes\cdots\otimes\ket{\phi_{a}}}_{k-1}
\otimes\underbrace{\ket{\phi_{1/a}}\otimes\cdots\otimes\ket{\phi_{1/a}}}_{k-1},
\end{equation}
where $\ket{\psi}$ is the secret state and 
\begin{equation}
\phi_a\left(x\right)=\braket{x|\phi_a}
=\left(\pi a^2\right)^{-1/4} \text{e}^{-x^2/{2a^2}}.
\end{equation}
Let us write this state as 
\begin{align}
\ket{\Psi}
=&\int\text{d}^n\bm{x}\, \Psi(\bm{x})\ket{x_1}\otimes \cdots\otimes \ket{x_n}
\nonumber\\
=&
\int \text{d}^n\bm{x}\ \Psi  \left(\bm{x}\right)\ket{f_1(\bm{x})}\otimes \cdots\otimes \ket{f_n(\bm{x})},
\end{align}
where
\begin{equation}
\ket{\Psi}=\psi\left(x_1\right)\prod_{i=2}^{k}\phi_a\left(x_i\right)\prod_{i=k+1}^{n}\phi_{1/a}\left(x_i\right).
\end{equation}
	
The dealer then performs the encoding using a linear canonical point transformation
\begin{equation}
f_j \mapsto g_i= \sum_j g_{ij} f_j.
\end{equation}
The corresponding unitary transformation  then maps the state $\ket{\Psi}$ to
\begin{equation}
\left|\det g \right|^{1/2}
	\int \text{d}^{2k-1}\bm{x} \Psi  \left(\bm{x}\right)\ket{g_1(\bm{x})}\otimes \cdots\otimes \ket{g_{2k-1}(\bm{x})}.
\end{equation}
The dealer, however, has to choose $\{g_i\}$ such that any~$k$ players are able to disentangle the secret state but that fewer is unable to do so. For this purpose, in the case of sufficiently large~$a$,
only the orthogonal projection  $\iota_i$ of each vector $g_i$ into the space spanned by the vectors $\{f_1, \ldots,f_{2k-1}\}$ is important. The vectors $\{g_i\}$ then must be chosen such that any~$k$ vectors from the set $\{f_1, \iota_1, \ldots, \iota_{2k-1} \}$ are linearly independent. This linear independence condition guarantees that any~$k$ players are able to extract the secret.
	
For convenience, let us express~$\mathbb{F}^{2k-1}\in \mathbb{R}^{2k-1}$ as a direct sum of three mutually orthogonal subspaces
\begin{equation}
\mathbb{F}^{2k-1}=\mathbb{X}\oplus \mathbb{Y}\oplus \mathbb{Z},
\end{equation}
where~$\mathbb{X}$ is the one-dimensional space spanned by~$f_1$ and~$\mathbb{Y}$ and~$\mathbb{Z}$ are~$k-1$-dimensional spaces spanned by~$\{f_2, \ldots, f_k\}$ and~$\{f_{k+1},\ldots, f_{2k-1} \}$, respectively. Now let us relabel~$\{x_i\}$ coordinates as~$(x,y_i,z_i)$ coordinates with 
\begin{align}
x=&x_1,\; y_i=x_{i+1},\; z_i=x_{k+i},
i\in\{1,\ldots,k-1\}.
\end{align}
The wavefunction $\Psi$ is then 
\begin{equation}
\Psi\left(\bm{x}\right)=\psi\left(x\right)\prod_{i=1}^{k-1} \phi_a\left(y_i\right)\phi_{1/a}(z_i).
\end{equation}

Without loss of generality, the first~$k$ players collaborate to retrieve the quantum secret. The players then make the linear coordinate transformation
\begin{equation}
g_i\mapsto\xi_i=\sum_j \xi_{ij} f_j
\end{equation}
assuming $\xi_i=g_i$ for all $i>k$.
	
For convenience, let us define a decomposition for every vector $\xi_i$ as a sum of three mutually orthogonal vectors, each of which belongs to subspaces $\mathbb{X}$, $\mathbb{Y}$ and $\mathbb{Z}$
\begin{equation}
\label{eq:encdectrs03}
\xi_i=\alpha_i+\beta_i+\gamma_i.
\end{equation} 
Equivalently,  we can write
\begin{equation}
\label{eq:encdectrs032}
\xi_i\left(\bm{x}\right)=\alpha_i x+\sum_j \beta_{ij} y_j+\sum_j \gamma_{ij}z_j.
\end{equation}
In the case that the vectors $g_i$ are chosen in such a way that any~$k$ vectors from the set $\{f_1, \iota_1, \ldots, \iota_{2k-1}  \}$
are linearly independent, the players can design the transformation
$g_i\mapsto\xi_i$
such that
\begin{align}\label{eq:dectrans}
\alpha_1=&1, \qquad \beta_1=0,\nonumber \\ \alpha_{i+1}=&\alpha_{k+i}, \; \beta_{i+1}=\beta_{k+i},
\end{align}
where $i\in\{1,\ldots,k-1\}$.
Then transformation~(\ref{eq:dectrans})
extracts the secret for sufficiently large values of parameter $a$.

\subsection{Mutual information}\label{sec:Mutual information}
Here we review the key notions of mutual information,
which is the method for quantifying information security and defining quantum secret sharing.
We begin by presenting salient facts about Shannon and von~Neumann entropy followed by requisite knowledge concerning classical and quantum mutual information.
Finally, in this subsection,
we discuss the security for discrete quantum secret sharing as our aim is to analyze security
for continuous-variable quantum secret sharing.
\subsubsection{Shannon and von~Neumann entropy}
Here we review Shannon and von~Neumann entropy as these notions of entropy underpin the formulation of classical and quantum mutual information.
This subsubsection also helps to elucidate the compact notation we use throughout this paper.
\paragraph{Shannon entropy.}
Let~$Z$ be a statistical ensemble defined by a classical random variable~$z$
and its associated probability distribution $\{p_j\}= \{p_1,\ldots,p_n\}$,
which can be expressed as a probability vector
$\bm{p}=(p_1,\ldots,p_n)^\top$.
The logarithm of this vector (always using base~2 here) is
\begin{equation}
     \log\bm{p}
            :=(\log\bm{p}_j).
\end{equation}
Using the Hadamard (elementwise) product~$\bm{a}\circ\bm{b}:=(a_ib_i)$~\cite{Davis1962} for vectors
and the sum of such elements $\bm{a}\odot\bm{b}:=\sum_ia_ib_i$,
the Shannon entropy is
\begin{equation}
\label{eq:HSh}
	H_\text{Sh}(\bm{p})
		=-\bm{p}\odot\log\bm{p}
		=-\bm{p}\cdot\bm{\log p}.
\end{equation}
Thus,
$H_\text{Sh}$ yields the number of bits per letter needed to completely specify~$Z$ in the asymptotic limit of infinitely long strings~\cite{Shannon:2001:MTC:584091.584093}.
Shannon entropy is thus a measure for the uncertainty of~$z$
or it indicates how much information each letter in the string that uses the alphabet~$Z$ carries.
\paragraph{Von Neumann entropy}
In the same vein, the information content of a quantum state~$\rho$~(\ref{eq:eqfive}) can be quantified  by determining how many qubits are needed to represent state~$\rho$ in the asymptotic limit of an infinite ensemble of physical systems.
This quantum-information content,
known as the von~Neumann entropy~\cite{von1955mathematical},
amounts to computing a classical Shannon entropy~(\ref{eq:HSh})
\begin{equation}
\label{eq:HvN}
H_\text{vN}(\rho)
=-\operatorname{tr}\left(\rho\log_2\rho\right)
=H_\text{Sh}\left(\operatorname{spec}\bm{\rho}\right),
\end{equation}
for $\operatorname{spec}\bm{\rho}$
a vector comprising eigenvalues of the state~$\rho$.
\paragraph{Continuous-variable quantum entropy.}
For continuous-variable Gaussian states,
we define the vectors
\begin{equation}
\label{eq:eqfour3}
\bm{\nu}^\pm
:=\frac{\bm{\nu}\pm\mathds{1}}{2}
\end{equation}
with~$\bm{\nu}$ the symplectic spectrum~(\ref{eq:eqone3})
and~$\mathds{1}$ the vector with all entries being unity.
Thus,
the von~Neumann entropy is~\cite{PhysRevA.59.1820}
\begin{equation}
\label{eq:entropy}
	H_\text{vN}(\rho)
		=\bm{\nu}^+\odot\log\bm{\nu}^+
			+\bm{\nu}^-\odot\log\bm{\nu}^-.
\end{equation}
These entropy expressions are used in the formul\ae\ for mutual information.
\paragraph{Convenient notation for states in entropy formul\ae}
A convenient notation for entropy,
which is independent of being classical or quantum,
uses a label for the classical or quantum state.
Rather than specify the state as~$\bm{p}$ classically or~$\rho$ quantumly,
we label the state by a capital letter such as~A and~B,
with these labels commensurate with the usual Alice-and-Bob nomenclature in cryptology~\cite{RSA78}.
\paragraph{Conditional entropy.}
Labelling the joint state held by~A and~B as~$\text{AB}$, the conditional entropy is abstractly expressed as
\begin{equation}
\label{eq:conditionalentropy}
H\left(\text{A}|\text{B}\right)\coloneqq H\left(\text{A}\text{B}\right)-H\left(\text{B}\right)
\end{equation}
for any valid formula for entropy, whether classical~(\ref{eq:HSh}) or quantum~(\ref{eq:HvN}).
\paragraph{Classical conditional entropy.}
\label{para:classicalconditionalentropy}
The classical conditional entropy~\cite{cover2012elements} is obtained from Eq.~(\ref{eq:conditionalentropy})
by replacing
\begin{equation}
	H(\text{A})\mapsto
	H_\text{Sh}\left(\bm{p}_\text{A}\right)
\end{equation}
for~$\bm{p}_\text{A}$
the distribution held by~A.
Similarly, we replace
\begin{equation}
	H(\text{B})\mapsto
	H_\text{Sh}\left(\bm{p}_\text{B}\right)
\end{equation}
and
\begin{equation}
	H(\text{AB})\mapsto
	H_\text{Sh}\left(\bm{p}_\text{AB}\right).
\end{equation}
$H\left(\text{A}|\text{B}\right)$ quantifies the correlation between~$\text{A}$ and~$\text{B}$ as the reduction of the number of bits per letter needed to specify~$\text{A}$ given~$\text{B}$ is known.
\paragraph{Quantum conditional entropy.}
\label{para:quantumconditionalentropy}
The quantum conditional entropy~\cite{wilde2013quantum} is obtained from Eq.~(\ref{eq:conditionalentropy})
by replacing
\begin{equation}
	H(\text{A})\mapsto
	H_\text{vN}\left(\rho_\text{A}\right)
\end{equation}
for~$\rho_\text{A}$
the quantum state held by~A.
Similarly, we replace
\begin{equation}
	H(\text{B})\mapsto
	H_\text{vN}\left(\rho_\text{B}\right)
\end{equation}
and
\begin{equation}
	H(\text{AB})\mapsto
	H_\text{vN}\left(\rho_\text{AB}\right).
\end{equation}
Although classical conditional entropy is always positive,
for evaluatingquantum conditional entropy can be negative~\cite{horodecki2005partial}.  
\subsubsection{Classical and quantum mutual information}
We explain classical mutual information~\cite{cover2012elements} and
quantum mutual information~\cite{wilde2013quantum},
first as an abstract concept regardless of whether classical or quantum information is chosen.
Then we explain each of classical and quantum mutual information.
Quantum mutual information is vital for evaluating security for secret sharing.
\paragraph{Mutual information.}
Labelling the joint state held by~A and~B as~$\text{AB}$,
mutual information is abstractly expressed as
\begin{equation}
\label{eq:mutualinformation}
I\left(\text{A};\text{B}\right)
\coloneqq H\left(\text{A}\right)+H\left(\text{B}\right)-H\left(\text{AB}\right)
\end{equation}
for any valid formula for entropy,
whether classical~(\ref{eq:HSh}) or quantum~(\ref{eq:HvN}).
Classical mutual information~\cite{wilde2013quantum} is obtained from Eq.~(\ref{eq:mutualinformation})
by replacing
\begin{equation}
	H(\text{X})\mapsto
	H_\text{Sh}\left(\bm{p}_\text{X}\right) 
\end{equation}
with~$\text{X}\in\{\text{A},\text{B}\}$ for~$\bm{p}_\text{X}$ and
\begin{equation}
	H(\text{AB})\mapsto
	H_\text{Sh}\left(\bm{p}_\text{AB}\right)
\end{equation}
as discussed in~\P\ref{para:classicalconditionalentropy}.
Classical mutual information quantifies the correlation between two statistical ensembles~$\text{A}$ and~$\text{B}$
as the reduction of the number of bits per letter needed to specify one of the variables given the other variable is known.
\paragraph{Quantum mutual information.}
The quantum mutual information~\cite{wilde2013quantum} is obtained from Eq.~(\ref{eq:mutualinformation})
by replacing
\begin{equation}
	H(\text{A})\mapsto
	H_\text{vN}\left(\rho_\text{A}\right)
\end{equation}
for~$\rho_\text{A}$
the quantum state held by~A.
Similarly, we replace
\begin{equation}
	H(\text{B})\mapsto
	H_\text{vN}\left(\rho_\text{B}\right)
\end{equation}
and
\begin{equation}
	H(\text{AB})\mapsto
	H_\text{vN}\left(\rho_\text{AB}\right).
\end{equation}
Quantum mutual information is always positive and quantifies the total correlations contained in the bipartite state~$\rho_\text{AB}$. 
Quantum mutual information is employed to define and evaluate the security of quantum secret-sharing schemes (QSS).
\paragraph{Relation between conditional entropy and mutual information.}
The relation between conditional entropy and mutual information is
\begin{equation}
\label{eq:RMIH}
I\left(\text{A};\text{B}\right)= H\left(\text{A}\right)-H\left(\text{A}|\text{B}\right)=H\left(\text{B}\right)-H\left(\text{B}|\text{A}\right)
\end{equation}
for any valid formula for entropy,
whether classical~(\ref{eq:HSh}) or quantum~(\ref{eq:HvN}).
The relation between classical mutual information
and classical conditional entropy is obtained from Eq.~(\ref{eq:RMIH})
by replacing
\begin{equation}
	H(\text{X}) \mapsto
	H_\text{Sh}\left(\bm{p}_\text{X}\right) 
\end{equation}
with~$\text{X}\in \{\text{A},\text{B}\}$ and
\begin{equation}
	H(\text{X}|\text{Y}) \mapsto
	H_\text{Sh}\left(\bm{p}_\text{XY}\right)-H_\text{Sh}\left(\bm{p}_\text{Y}\right) 
\end{equation}
with~$\left(\text{X},\text{Y}\right)\in \{\left(\text{A},\text{B}\right),\left(\text{B},\text{A}\right)\}$ as discussed in~$\P$\ref{para:classicalconditionalentropy}.

The relation between quantum mutual information and quantum conditional entropy is obtained from Eq.~(\ref{eq:RMIH})
by replacing
\begin{equation}
	H(\text{X})\mapsto
	H_\text{vN}\left(\rho_\text{X}\right)
\end{equation}
with~$\text{X}\in \{\text{A},\text{B}\}$ and
\begin{equation}
	H(\text{X}|\text{Y})\mapsto
	H_\text{vN}\left(\rho_\text{XY}\right)-H_\text{vN}\left(\rho_\text{Y}\right)
\end{equation}
with~$\left(\text{X},\text{Y}\right)\in \{\left(\text{A},\text{B}\right),\left(\text{B},\text{A}\right)\}$ as discussed in~$\P$\ref{para:quantumconditionalentropy}.
\subsubsection{Classical and quantum secret sharing}
In this subsubsection,
we explain classical and quantum secret-sharing protocols. We begin by establishing the agents of the protocol namely dealer and players and the structures corresponding to the set of players. Afterwards, we explain classical secret-sharing schemes along with classical secrecy and recoverability conditions corresponding to them. Then we define quantum secret sharing and provide the secrecy and recoverability conditions corresponding to them based on quantum mutual information.
\paragraph{Dealer and players.}
\label{para:dealerandplayers}
We establish the agents of the protocol and the structures corresponding to sets of players,
who are one kind of agent.
Specifically,
secret sharing comprises $n+1$ agents,
namely one dealer~$\mathcal{D}$ and~$n$ players labelled
\begin{equation}
\label{eq:players}
\mathcal{P}
=\{P_1, P_2,\ldots,P_n\}.
\end{equation}
The power set of players is~$2^\mathcal{P}$,
which is the set of all subsets of the set of players~(\ref{eq:players}).

The role of the dealer is to encode the secret message
$S\in\{0,1\}^*$
(classically) or $\rho_s\in\mathcal{S}\left(\mathscr{H}\right)$ (\ref{eq:eqfive})
quantumly, into $n$ shares and distributes them among players in such a way that specific elements of $2^\mathcal{P}$ form the authorized structure~$\mathcal{A}$
to retrieve the secret message whereas other elements are denied any information about the secret whatsoever.
The set of elements that are denied any information is known as the forbidden structure~$\mathcal{F}$.
\paragraph{Access structure.}
Let 
\begin{equation}
	\mathcal{F},\,\mathcal{A}\subseteq 2^\mathcal{P},\;
	\mathcal{F},\mathcal{A}\neq\emptyset,
\end{equation}
where~$\mathcal{F}$ is monotonically decreasing
and~$\mathcal{A}$ is monotonically increasing, and
\begin{equation}
\mathcal{F}\cap\mathcal{A}=\emptyset.
\end{equation}
Then the set
\begin{equation}
\label{eq:accessstructure1}
	\Gamma=\{\mathcal{F},\mathcal{A}\}
\end{equation}
is the access structure on $\mathcal{P}$. 
Quantumly,
the no-cloning theorem implies that the existence of two disjoint authorized group is forbidden~\cite{PhysRevA.61.042311}.

\paragraph{Secret-sharing protocol.}
 Let $\mathscr{H}$ be a Hilbert space and let $\mathcal{S}(\mathscr{H})$ be all density operators on a Hilbert space $\mathscr{H}$.
 In a quantum secret-sharing scheme, the dealer's task is to encrypt a quantum secret $\rho_\text{s}\in \mathcal{S}\left(\mathscr{H}\right)$ into a composite system of Hilbert spaces
\begin{equation}
\mathscr{H}_1, \mathscr{H}_2, \ldots, \mathscr{H}_n,
\end{equation}
each of which is called a share labelled by $S_1,S_2,\ldots,S_n$. Let 
\begin{equation}
N\coloneqq \{S_1,S_2,\ldots,S_n \}
\end{equation}
be the entire set of shares and
\begin{equation}
\mathscr{H}_N:=\bigotimes_{S_i\in N}\mathscr{H}_{S_i}
\end{equation}
be the corresponding Hilbert space. The dealer then distributes the shares among players~(\ref{eq:players}).  For a subset $A \subseteq N$ of shares
\begin{align}
\mathscr{H}_\text{A}\coloneqq\bigotimes_{S_i\in A}\mathscr{H}_{S_i},
\end{align}
the QSS encoding is
\begin{align}\label{QSSQO}
W_N: \mathcal{S}(\mathscr{H}) \to \mathcal{S}(\mathscr{H}_N),
\end{align}
which is a completely positive and trace preserving map~\cite{PhysRevA.72.032318}.

The composition map of the encoder $W_N$ for a subset $X \subseteq N$,
and the partial trace of the complement $N\setminus X$ is
\begin{align}
W_X\coloneqq \operatorname{tr}_{N\setminus X}W_N.
\end{align}
A QSS scheme is then defined by the quantum operation $W_N$ (\ref{QSSQO}) that is reversible with respect to $\mathcal{S}(\mathscr{H})$.
The set $N$ is divided into two mutually disjoint structures
$\mathcal{A}$ and $\mathcal{F}$~\cite{PhysRevA.72.032318}.
\begin{enumerate}[(i)]
\item A set $X \subseteq N$ is \textit{authorized} if $W_X$ is reversible  with respect to $\mathcal{S}(\mathscr{H})$.
\item  A set $X \subseteq N$ is \textit{forbidden} if $W_X$ is vanishing  with respect to $\mathcal{S}(\mathscr{H})$.
\end{enumerate}
The arguments so far are valid in the classical cases,
which is verified by replacing the corresponding notions with the classical ones~\cite{PhysRevA.72.032318}. 
\paragraph{Classical secrecy and recoverability conditions.} Classical secrecy is expressed in terms of conditional entropy  but equivalently can be expressed in terms of mutual information. Strictly speaking, conditional entropy is between shares. However, for simplicity, in the literature there is a tendency to refer to conditional entropy between players. $\Pi$ is a perfect SS scheme on $\Gamma$ if 
\begin{itemize}
\item $\forall \mathcal{B}\in\mathcal{A}\,\, H\left(S|\mathcal{B}\right)=0$
\item $\forall \mathcal{B}\notin\mathcal{A}\,H\left(S|\mathcal{B}\right)=H\left(S\right)$.
\end{itemize}
\paragraph{Quantum secrecy and recoverabiliy conditions.}
Here we discuss quantum secrecy conditions in terms of quantum mutual information. Strictly speaking, quantum mutual information is between shares. However, for simplicity, in the literature there is a tendency to refer to quantum mutual information between players. We can imagine that the system~$\rho_\text{s}$  is part of a larger system and that this compound system is initially in a pure state $\ket{\psi^\text{RS}}$.
Therefore,
\begin{equation}\label{rhostr}
\rho_\text{s}
=\operatorname{tr}_\text{R}\left(\ket{\psi^\text{RS}}
\bra{\psi^\text{RS}}\right).
\end{equation}

In a QSS, if a subset $X\in 2^\mathcal{P}$ satisfies
\begin{equation}
\label{secrecy condition}
	I\left(\text{R};\text{X}\right)=0 \; (\text{secrecy condition}),
\end{equation}
then $\rho^X$ does not contain any information about~$\rho_s$~\cite{Imai:2005:ITM:2011608.2011615}. On the other hand, if a subset $X$ satisfies	 
\begin{equation}\label{eq:recoverability condition}
I\left(\text{R};\text{X}\right)=I\left(\text{R};\text{S}\right)\; (\text{recoverability condition}),
\end{equation}
then $X$ contains full information about~$\rho_\text{s}$~\cite{Imai:2005:ITM:2011608.2011615}.
\paragraph{Access structure.} Specific subsets of players form the authorized structure
\begin{equation}
\label{eq:accessstructure}
\mathcal{A}
:=\left\{Y\in2^\mathcal{P};
I(\text{R};\text{S})=I(\text{R};\text{X})\right\}
\end{equation}
to retrieve the  message whereas the other subsets,
i.e., the forbidden structure
\begin{equation}
\label{eq:forbiddenstructure}
\mathcal{F}
:=\left\{X\in2^\mathcal{P};
I(R;X)=0\right\},
\end{equation}
are denied any information about the secret whatsoever.
We define the QSS access structure as
\begin{equation}
\label{eq:QSSaccessstructure}
\Gamma:=\left\{\mathcal{A},\mathcal{F}\right\}.
\end{equation}

\paragraph{Threshold secret sharing.}
\label{para:thresholdsecretsharing}
 $((k,n))$ threshold QSS schemes are a class of QSS schemes in which the authorized structure comprises all groups of~$k$ or more players while there are $n$ players in total (the use of double parentheses distinguishes it from a classical scheme). $((k,n))$ quantum threshold schemes exists provided no-cloning theorem is satisfied~\cite{PhysRevA.61.042311}. Any quantum secret sharing scheme can be reduced to $((k,2k-1))$ threshold schemes~\cite{PhysRevA.61.042311}.  In QSS schemes, the size of shares allocated to each player must be at least as large as the size of the secret~\cite{PhysRevA.61.042311,PhysRevA.72.032318}.
\subsection{Ramp quantum secret-sharing scheme}
\label{subsec:RQSSS}
As an extension of $(k,n)$--threshold SS schemes
discussed in $\P$\ref{para:thresholdsecretsharing},
ramp secret-sharing (RSS) schemes were proposed by Blakley-Meadows~\cite{10.1007/3-540-39568-7_20} and Yamamoto~\cite{yamamoto1986secret}.
In RSS  schemes, the dimension of each share is reduced
compared to that of the original system by sacrifice security for admitting the intermediate property for some sets of shares, which are denoted as intermediate sets.

In a $(k,L,n)$ threshold RSS scheme, any~$k$ or more players are able to fully reconstruct the secret~$s$,
whereas any $k-L$ or less players are denied to obtain any information of it. Furthermore, from arbitrary $k-j$ shares for $j\in\{1,\ldots,L-1\}$,
some information of the secret leak out with the size of~$\frac{j}{L}$ in~$s$.
	
A QSS scheme~$W_N$ is called perfect if any set $X \subseteq N$ is either authorized or forbidden. Otherwise, $W_N$ is a RQSS scheme.
The access structure of a RQSS scheme is the list of the forbidden, intermediate, and authorized sets. A set $X \subseteq N$ is called \textit{intermediate} if $W_X$ is neither vanishing nor reversible  with respect to $\mathcal{D}(\mathscr{H})$~\cite{PhysRevA.72.032318}. 
Formally, the access structure of the set $N$ is defined by a map
\begin{equation}
\label{eq:eq64}
f : 2^\mathcal{P} \to \{0,1,2\},
\end{equation}
where~$0, 1$ and~$2$ represent
$\mathcal{F}$,$\mathcal{I}$ and~$\mathcal{A}$, respectively. Now that we have the essential background, we proceed in the next section to explain our approach to CVRQSS.
\section{Approach}
\label{sec:Approach}
In this section,
we introduce a CV RQSS protocol and explain how to certify. We discuss the success criterion of the certification protocol. Furthermore, we specify what the parties need to do to complete the certification.
	
\subsection{Continuous-variable ramp quantum secret-sharing protocol with Gaussian states and operations}\label{ARQSSP}
Here we modify the discrete-variable RQSS protocol discussed in \S\ref{subsec:RQSSS} 
into a continuous-variable counterpart.
We choose Gaussian states and operations,
which are convenient mathematically due to the elegance of techniques based on the semidirect product of the symplectic group and the Heisenberg-Weyl group~(\ref{eq:eqone}).
However, the price paid for this convenience is discarding potentially powerful universal operations~\cite{Lloyd2003}.
Whereas, in the discrete case, specification of number of players and threshold condition $L$ suffices to determine the cardinality of the three structures,
the CV case is more complicated due to squeezing limitations. 
\subsubsection{Quantum-optical resources}	
The optical realization comprises displacers that generate Heisenberg-Weyl group elements and single-mode squeezers, passive beam-splitters and phase-shifters  that generate the  semidirect product of the symplectic group~~(\ref{eq:eqone}).
The inputs are vacuum states of light.
For the closed disk
\begin{equation}
\label{eq:disk}
	D_s
		:=\left\{\zeta\in\mathbb{C}:|\zeta|\leq s^2\right\},\;
	s\in\mathbb{R}^+,
\end{equation}
the dealer's and players' single-mode squeezers~(\ref{eq:twenty})
have limited squeezing capability corresponding to~$\zeta\in D_s$,
with~$s=s_\text{max}^\text{D}$ for the dealer and~$s=s_\text{max}^\text{P}$ for the player.
\subsubsection{Dealer's task}
Here we specify the dealer's task in the RQSS protocol. Dealer's tasks include preparing a quantum secret,
choosing an access structure,
encoding the quantum secret and
distributing shares. 
\paragraph{Two-mode squeezed-vacuum source.}
The dealer prepares a TMSV state~(\ref{eq:twonezeta})
drawn randomly from the uncountable set
\begin{equation}
\label{eq:QD}
Q_\text{D}
:=\left\{\ket{\zeta}_\text{TMSV};\zeta\in D_{s_\text{max}^\text{D}}\right\}.
\end{equation}
The dealer's task is to encode one mode of this quantum state into an $n$-mode entangled state by mixing it with $n-1$ ancillary states in an $n$-mode active interferometer. The dealer then sends one share to each of the players  in such a way that the elements of power set of players are divided into   three predetermined mutually disjoint sets known as authorized, intermediate and forbidden structures.

In order for the dealer to prepare  the TMSV randomly, first, he needs to decide the complex two-mode squeezing parameter $\zeta=s\text{e}^{\text{i}\theta}$~(\ref{eq:twtwo}),
where $s$ is bounded by $s_\text{max}^\text{D}$. The dealer generates two random numbers~$a,b\in\left[ 0,1 \right]$.
Then the dealer assigns
\begin{equation}
	s\gets\sqrt{2a s_\text{max}^\text{D}},\;
	\theta\gets2\pi b.
\end{equation}
\paragraph{Choosing a useful, feasible access structure.}
\label{para:choosingF}
The dealer chooses an access structure~$\Gamma$ based on the desired application. The dealer then runs an algorithm that accepts~$\Gamma$, covariance matrix of TMSV state $\bm{V}$, $s_\text{ max}^\text{D}$ and $s_\text{ max}^\text{P}$  as input and yields the encoding transformation or else null as the output.
The dealer then performs the encoding transformation and distributes the shares among players.  
\subsubsection{Players' task}
\label{subsubsec:playerstask}
The players' task in any authorized set is to reconstruct the quantum secret. One player is assigned to hold the secret after reconstruction. The aforementioned player forms a structure with other players in the authorized set who perform a Gaussian unitary operation on their shares such that the state of the share belonging to the assigned player become the same as the original secret state. The players in any intermediate set are allowed to partially reconstruct the secret state. Furthermore, the players in a forbidden structure should not gain any information about the quantum secret whatsoever.
\subsection{Certification protocol}
\label{subsec:certificationprotocol}
In this subsection we introduce a certification protocol that ascertains whether the RQSS protocol succeeds. The success criterion is discussed in this subsection. We specify what the parties need to do to complete the certification.
\subsubsection{Agents and resources}
In this subsection, we establish
the agents of the certification protocol, namely, the dealer, the players and the referee who serves as skeptical certifier. Furthermore, we specify available resources for each party.

The dealer and players share trusted error-free classical and quantum communication channels between each other,
and the referee also shares trusted error-free classical and quantum communication channels with each player and with the dealer.
In our continuous-variable setting,
the referee possesses single-mode homodyne detectors~\cite{RevModPhys.84.621}.
Henceforth, we only refer explicitly to homodyne measurement, without loss of generality.
The dealer possesses a classical computer to choose the access structure~$\Gamma$
discussed in $\P$\ref{para:choosingF},
and the referee possesses a classical computer to run the certification algorithm.
\subsubsection{Dealer's encoding and announcement}
The dealer chooses
an access structure~$\Gamma$ discussed in~$\P$\ref{para:choosingF}
and announces~$\Gamma$
to the players and to the referee.
The dealer encodes shares based on the choice of~$\Gamma$ and the quantum secret,
such as a randomly chosen state in the parameter disk~(\ref{eq:QD}),
and announces this encoding to the players.
\subsubsection{Rounds}
\label{subsubsec:rounds}
In this subsubsection,
we define `rounds',
which are repetitions of the protocol between the dealer, players and referees.
The concept for these rounds is depicted in Fig.~\ref{fig:strategy}.
\begin{figure}[h]
\begin{center}
\includegraphics[scale=0.25]{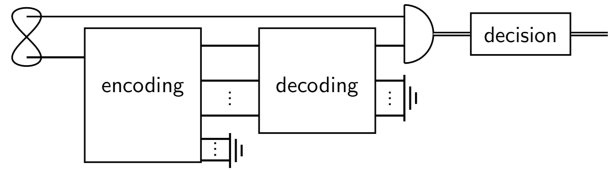}
\end{center}
\caption{Two-mode entangled state with one share, or mode,
sent directly to the referee 
and the other share encoded for the players.
The referee requests a subset of players to decode their shares
and send this result to the referee
who decides whether they have succeeded or not.}
\label{fig:strategy}
\end{figure}
First the dealer prepare a suitable two-mode Gaussian state,
which is the same two-mode Gaussian state for all rounds,
and sends one mode to the referee
and the other mode into an encoder,
which is also unchanging over all rounds.
This encoder creates shares 
that are sent to each player.

After the shares are received by players,
the referee requests a subset of players,
which can be authorized, forbidden or intermediate,
to try to reconstruct the quantum secret and then send their shares to the referee.
The referee then performs single-mode homodyne measurements and save the measurement results.
Rounds continue until the referee permits the dealer and players to stop.
\subsubsection{Referee's certification strategy}
\label{subsubsec:certificationstrategy}
The referee's task is to certify the protocol by ascertaining the dealer's announcement that the access structure is the announced~$\Gamma$.
The referee conducts tests by requiring many rounds per instance,
with each instance corresponding to testing
whether a fixed subset of players is in~$\mathcal{A}$,~$\mathcal{I}$ or~$\mathcal{F}$ structures determined by~$\Gamma$.
Due to the statistical nature of the test,
the referee cannot be 100\% sure that the inference is correct;
rather the referee makes a decision if the probability of being correct exceeds some threshold value,
itself strictly greater than~$1/2$.
\paragraph{Sufficiency condition.}
When a sufficiency condition is met to ascertain whether the subset of players are determined to be in a structure compatible with the dealer's announced~$\Gamma$,
the referee instructs the players to stop.
If that instance passes the test,
the referee announces a new subset of players to test and the rounds repeat until the referee has enough data to pass the sufficiency test.
If the instance results in the dealer and players failing,
the procedure stops as the team of dealer and players has failed the test.
The dealer and players pass only if every instance passes.

\subsection{Summary of approach }
Here we modified the discrete-variable RQSS protocol as the CV counterpart in the case of Gaussian states and operations. Furthermore,  we introduced a certification protocol that ascertains whether the RQSS protocol succeeds. Also we discussed the success criterion and we specified what the parties need to do to complete the certification.
	
\section{Results}
\label{sec:Results}
In this section we present our main results.
Our first result is a CV version of quantum mutual information.
This CV quantum mutual information is then used to quantify quantum-information leakage for Gaussian states and operations.
Based on this leakage characterization,
we introduce a certification test,
in the framework of quantum-interactive proofs,
and provide a practical test to implement this test.
\subsection{CV quantum mutual information}
\label{subsec:CVQMI}
In this subsection, we develop the quantum mutual information for the CV RQSS quantum access structures and employ it to quantify quantum-information leakage for Gaussian states and operations.
We define~$\mathcal{I}$
corresponding to CV RQSS protocols based on quantum mutual information.

Let $\ket{\psi}^\text{RS}$  be a pure two-mode Gaussian state and let the  quantum secret   be~$\rho_\text{s}$~(\ref{rhostr}).
Then
\begin{align}
\mathcal{I}\coloneqq& \left\{X\ ;\; 0<I\left(\text{R};\text{X}\right)<I\left(\text{R};\text{S}\right)\right\},
\end{align}
and ~$\mathcal{A}$ and~$\mathcal{F}$ are  obtained from Eqs.~(\ref{eq:accessstructure}) and~(\ref{eq:forbiddenstructure}), respectively.
	
We now calculate mutual information between the referee and any multiplayer structure for TRS03. Specifically, we consider a two-mode entangled state~(\ref{eq:disk}) such that one mode is used for the secret and the other mode is used for the reference system. We choose this system because that way the referee can do a sensitive entanglement check to verify that the reconstructed state is entangled with a reference system as it should be.
To simplify matters, without loss of generality,
we investigate in particular a TMSV with one mode being the quantum secret and the other mode being the reference system.

We solve the quantum mutual information between an extracted secret 
obtained by any player structure with~$k$ elements and the reference system. In order to do so, by using Eq.~(\ref{eq:eqfive}),
we transform the density function of the reference system and the extracted secret~(\ref{eq:position representation}) into a Gaussian Wigner function represented by a mean vector and a covariance matrix
from which the symplectic eigenvalues~(\ref{eq:eqone3}) are calculated. 
	
The symplectic eigenvalues~(\ref{eq:eqone3}) are inserted into Eq.~(\ref{eq:entropy})
in order to calculate the local and global von Neumann entropy of the extracted secret and reference system from which the quantum mutual information is solved (\ref{eq:mutualinformation}).
Figure~\ref{fig:RQSS} shows the resultant quantum mutual information versus squeezing parameter in the case of $|\zeta|=2$.
\begin{figure}[h]
\begin{center}
\includegraphics[scale=0.5]{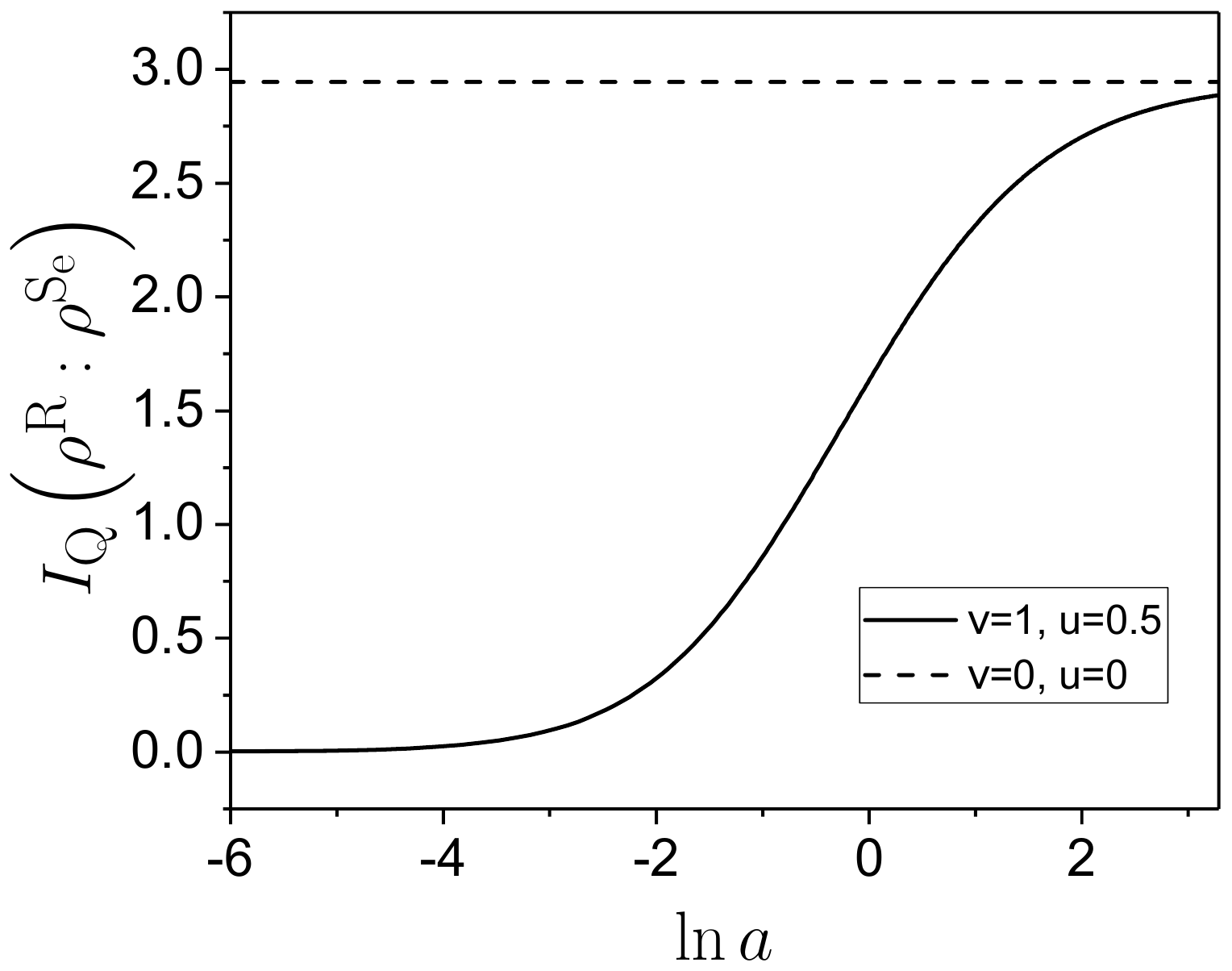}
\end{center}
\caption{Mutual information versus the squeezing parameter $\ln{a}$ for  one mode of a two mode squeezed vacuum state.}
\label{fig:RQSS}
\end{figure}
In~\S\ref{subsec:certificationtest}
we employ the CV quantum mutual-information approach to introduce a certification technique for CV RQSS schemes.

\subsection{Certification test for RQSS protocols}
\label{subsec:certificationtest}
In this subsection,  we establish our model for certification tests. Specifically, we introduce certification tests for~$\mathcal{A}$,~$\mathcal{F}$ and~$\mathcal{I}$, respectively.

\paragraph{RQSS certification for~$\mathcal{A}$.}
\label{para:RQSSCA}
Let~$I_\text{T}^\mathcal{A}$ be a threshold quantum mutual information chosen by the referee.
This quantum mutual information
quantifies the minimum knowledge that players in an access structure are able to obtain about the secret. Let $\beta>0$ be a maximum failure probability.
A test, which receives copies of some $X$ as input, and yields accept or reject, is a  test for certifying whether~$X \in \mathcal{A}$, if, with probability at least~$1-\beta$,
it both rejects every~$\rho^{\text{X}}$ for which
\begin{equation}\label{eq:soundnessqualified}
I\left(\text{X};\text{R}\right)<I_\text{T}^\mathcal{A}
\end{equation}
and accepts if
\begin{equation}
\label{eq:completenessqulified}
I\left(\text{X};\text{R}\right)\geq I_\text{T}^\mathcal{A}+\delta. 
\end{equation}
These conditions correspond to soundness~(\ref{eq:soundnessqualified}) and completeness~(\ref{eq:completenessqulified})
~\cite{kapourniotis2014verified,10.1007/978-3-540-24587-2_1,aolita2015reliable}.
\paragraph{RQSS certification for~$\mathcal{F}$.}
\label{para:RQSSCFFS}
Let $I_\text{T}^\mathcal{F}$ be a threshold quantum mutual information chosen by the referee,
which quantifies the maximum knowledge that players in the forbidden structure can obtain about the secret.
A test, which receives as input copies of some~$\rho^X$, and yields accept or reject, is a certification test for certifying whether $X \in \mathcal{F}$, if, with
probability at least $1-\beta$, it both accepts every $X$ for which 
\begin{equation}\label{eq:completenessforbiden}
I\left(\text{X};\text{R}\right)\leq I_\text{T}^\mathcal{F}-\delta,
\end{equation}
and rejects a different~$\rho^{\text{X}}$ for
\begin{equation}
\label{eq:sound1}
I\left(\text{X};\text{R}\right)>I_\text{T}^\mathcal{F}.
\end{equation}
These conditions are completeness~(\ref{eq:completenessforbiden}) and soundness~(\ref{eq:sound1}).
	
\paragraph{RQSS certification for~$\mathcal{I}$.}
\label{para:RQSSIS}
A test that receives copies of some~$X$ as input and yields accept or reject certifies whether~$X \in \mathcal{I}$ if, for a least probability~$1-\beta$,
it both rejects every $X$ for 
\begin{equation}\label{eq:soundnessintermediate1}
I\left(\text{X};\text{R}\right)\leq I_\text{T}^\mathcal{F}-\delta,
\end{equation}
or
\begin{equation}\label{eq:soundnessintermediate2}
I\left(\text{X};\text{R}\right)\geq I_\text{T}^\mathcal{A}+\delta. 
\end{equation}
and accepts if
\begin{equation}
\label{eq:completenessintermediae}
I_\text{T}^\mathcal{F}<I\left(\text{X};\text{R}\right)<I_\text{T}^\mathcal{A}.
\end{equation}
Conditions~(\ref{eq:soundnessintermediate1})
and~\ref{eq:soundnessintermediate2})
are soundness and
condition~(\ref{eq:completenessintermediae}) is
completeness. In the next subsection we employ our certification model to propose a practical test to ascertain RQSS protocols.

\subsection{Practical realization of the certification test}\label{sec:CP}
In this subsection, we propose a practical algorithm, for determining if~$X$ is in~$\mathcal{A}$,~$\mathcal{I}$ or~$\mathcal{F}$.
We prove propositions that the algorithm is both sound and complete. Furthermore, we provide a sufficiency test for the referee to know how many runs are required for her to have sufficient information to check if a particular element is in~$\mathcal{A}$,~$\mathcal{I}$ or~$\mathcal{F}$.
\subsubsection{Steps for certification}
Below we provide the steps for certifying RQSS.
Before commencing certification,
the referee numerically labels each element of the power set and proceeds to test each labelled element of the power set in order according to this labelling.
For simplicity,
and without loss of generality, we assume that each player holds one share; thus, the number~$n$ of modes
equals one more than the number of players, hence shares, in the given subset. This extra mode allows a single-mode reference field in addition to the modes held by the players.

The referee conducts a test that requires many rounds~(\ref{subsubsec:rounds}) for each power-set element. The test evaluates whether a fixed subset of players is in~$\mathcal{A}$,~$\mathcal{I}$ or~$\mathcal{F}$. In order to do so, the referee estimates the quantum mutual information~$I_\text{e}\left(\text{R},\text{S}_\text{e}\right)$ between the reference state~$\rho^\text{R}$ and the extracted secret state~$\rho^{\text{S}_\text{e}}$ such that 
\begin{equation}
    I_\text{e}\left(\text{R};\text{S}_\text{e}\right)\in\left[I\left(\text{R};\text{S}_\text{e}\right)-\epsilon,I\left(\text{R};\text{S}_\text{e}\right)+\epsilon\right],
\end{equation}
with a failure probability~$\beta<1/2$. Algorithm~\ref{alg:CertificationofRQSSprotocols} accepts~$I_\text{e}\left(\text{R},\text{S}_\text{e}\right)$ as input and determines the structure of the power-set element. If the test result is consistent with the dealer’s announcement that the access structure is the announced~$\Gamma$,
the referee announces a new subset of players to test; otherwise the procedure halts as the team of dealer and players has failed the certification test.

To estimate~$I_\text{e}\left(\text{R};\text{S}_\text{e}\right)$, the referee estimates the expectation values corresponding to each element of the matrices
\begin{align}
\label{eq:matrixG}
\bm{G}=&\begin{pmatrix}
2\hat{\bm{x}}^2_1&\frac{\left(\hat{\bm{x}}_1+\hat{\bm{x}}_2\right)^2}{2}&\hat{\bm{x}}_1\hat{\bm{x}}_3+\hat{\bm{x}}_3\hat{\bm{x}}_1&\hat{\bm{x}}_1\hat{\bm{x}}_4+\hat{\bm{x}}_4\hat{\bm{x}}_1\\
\frac{\left(\hat{\bm{x}}_1+\hat{\bm{x}}_2\right)^2}{2}&2\hat{\bm{x}}^2_2&\hat{\bm{x}}_2\hat{\bm{x}}_3+\hat{\bm{x}}_3\hat{\bm{x}}_2&\hat{\bm{x}}_2\hat{\bm{x}}_4+\hat{\bm{x}}_4\hat{\bm{x}}_2\\
\hat{\bm{x}}_1\hat{\bm{x}}_3+\hat{\bm{x}}_3\hat{\bm{x}}_1&\hat{\bm{x}}_2\hat{\bm{x}}_3+\hat{\bm{x}}_3\hat{\bm{x}}_2&2\hat{\bm{x}}^2_3&\frac{\left(\hat{\bm{x}}_3+\hat{\bm{x}}_4\right)^2}{2}\\
\hat{\bm{x}}_1\hat{\bm{x}}_4+\hat{\bm{x}}_4\hat{\bm{x}}_1&\hat{\bm{x}}_3\hat{\bm{x}}_4+\hat{\bm{x}}_4\hat{\bm{x}}_3&\frac{\left(\hat{\bm{x}}_3+\hat{\bm{x}}_4\right)^2}{2}&2\hat{\bm{x}}^2_4
\end{pmatrix},
\end{align}
and
\begin{equation}
\label{eq:matrixC}
\bm{C}=\begin{pmatrix}
\hat{\bm{x}}_1&\hat{\bm{x}}_2&
\hat{\bm{x}}_3&\hat{\bm{x}}_4
\end{pmatrix},
\end{equation}
with~$\hat{\bm{x}}$ defined in Eq.~(\ref{eq:eqthree}). The first and second modes hold reference and reconstructed secret states, respectively. The referee's result is then used to estimate the covariance matrix~(\ref{eq:eqone0}) of~$\rho^{\text{R}\text{S}_\text{e}}$ according to~\cite{aolita2015reliable}
\begin{align}
\label{eq:estimatecovariancematrix} 
V_{ij}^{\text{R}\text{S}_\text{e}}&=\langle\bm{G}_{ij}\rangle-\langle\bm{C}_i\rangle\langle\bm{C}_{j}\rangle, \qquad ij\notin \{12,21,34,43\},\\
V_{ij}^{\text{R}\text{S}_\text{e}}&=2\langle\bm{G}_{ij}\rangle-\langle\bm{G}_{ii}\rangle/2-\langle\bm{G}_{jj}\rangle/2-\langle\bm{C}_i\rangle\langle\bm{C}_{j}\rangle,\qquad ij\in \{12,21,34,43\}.
\end{align}
This covariance matrix is used to calculate the entropies of~$\rho^{\text{S}_\text{e}},\rho^{\text{R}}$ and~$\rho^{\text{R}\text{S}_\text{e}}$ using Algorithm~\ref{alg:CVQENTROPY}. The resultant entropies are then inserted into the standard formula for quantum mutual information~(\ref{eq:RMIH}).

The expectation value of each element of~(\ref{eq:matrixG}) and~(\ref{eq:matrixC}) is calculated by performing multiple homodyne measurements on identical and independent copies of~$\rho^{\text{R}\text{S}_\text{e}}$ and taking the average of the measurement results.
Using Chebyshev's inequality~\cite{aolita2015reliable},
the referee calculates an upper-bound for the estimation error of each expectation value as a function of number of rounds and~$\beta$.
Subsequently, this estimation error is then used to calculate the maximum expectation values' estimation error~$\epsilon_\text{max}$ of covariance-matrix entries via the standard formula for error propagation.
Afterwards she calculates the bound on the estimation error of entropies following Algorithm~\ref{alg:Upperboundofestimationerrorofquantummutualinformation}. The estimation error of~$I_\text{e}\left(\text{R};\text{S}_\text{e}\right)$ is bounded by summation of the entropies estimation errors. The rounds continue until the estimation error of~$I_\text{e}\left(\text{R};\text{S}_\text{e}\right)$ is below a prespecified acceptable~$\epsilon$ error.

\begin{algorithm}[H]
\caption{Continuous-variable quantum entropy~($H_\text{vN}$).}
\label{alg:CVQENTROPY}
\begin{algorithmic}
\Require{
\Statex $n\in\mathbb{N}$ \Comment{Number of modes}
\Statex $\bm{V}\in \mathbb{R}^{2n}\times \mathbb{R}^{2n}$ \Comment{Covariance matrix}
\Statex $\bm{\Omega}\in\mathbb{Z}^{2n}\times\mathbb{Z}^{2n}~(\ref{eq:eqfour})$
 }
\Ensure{
\Statex $H_\text{vN}\in \mathbb{R}^+$ \Comment{von Neumann entropy}
}
\Function{vonNeumannH}{$\bm{V}$}
\State $\bm{\nu}$$\leftarrow\text{Eigenvalues}_+\left(\text{i}\Omega \bm{V}\right)$.\Comment{Calculates positive eigenvalues.}
\Statex $\bm{\nu}^\pm\leftarrow\frac{\bm{\nu}\pm\mathds{1}}{2}$.\\
\Return $H_\text{vN}
\leftarrow\bm{\nu}^+\cdot\log\bm{\nu}^+
+\bm{\nu}^-\cdot\log\bm{\nu}^-.$
\EndFunction
\end{algorithmic}
\end{algorithm}

\begin{algorithm}
\caption{Upper bound of $H_\text{vN}$ estimation error.}
\label{alg:Upperboundofestimationerrorofquantummutualinformation}
\begin{algorithmic}
\Require{
\Statex $n\in\mathbb{N}$ \Comment{Number of modes}
\Statex $\bm{V}\in \mathbb{R}^{2n}\times \mathbb{R}^{2n}$ \Comment{Covariance matrix}
\Statex  $\epsilon_\text{max}$ \Comment{Maximum estimation error of covariance matrix elements}
}
\Ensure{
\Statex $H_{\text{vN,error}}^\text{upper}\in \mathbb{R}^+$ \Comment{Upper bound of QMI estimation error} 
}
 \Function{$H_{\text{vN},\text{error}}^\text{upper}$}{$\bm{V},\epsilon_\text{max}$}
\label{alg:qmierror}
\State $\sigma_\text{max} \gets$ maximal singular value of $\bm{V}$.
\State $\sigma_\text{min} \leftarrow$ minimal singular value of $\bm{V}$.\\
\Return  $H_{\text{vN},\text{error}}^\text{upper}$$\leftarrow\kappa\left(1+\log\left(2n\sigma_\text{max}\right)\right)2n\epsilon_\text{max}.$\Comment{$\kappa=\frac{\sigma_\text{max}}{\sigma_\text{min}}$ is always finite.}
\EndFunction
\end{algorithmic}
\end{algorithm}

\begin{algorithm}
\caption{Estimation of QMI.}
\label{alg:Estimation of quantum mutual information}
\begin{algorithmic}
\Require{
\Statex $T\in\mathbb{N}$ \Comment{Number of trials}
\Statex $\rho^{\otimes T}\in \mathcal{B}\left(L^2(\mathbb{R}^{2T})\right)$ \Comment{$T$ copies of the joint state~$\rho$ for the reference and players' reconstructed state}
\Statex  $\epsilon\in \mathbb{R}^+$
\Comment{Error tolerance for estimated QMI}
\Statex $\textsc{Tol}\in\interval({0,1/2})$
\Comment{Failure probability tolerance}
\Statex $\sigma \in \mathbb{R}^+$ \Comment{A uniform upper bound on the standard deviations of measurement results}
\Statex $\textsc{HomMeas}[\rho,x,\textsc{Mode},\theta]$ \Comment{Homodyne measurement on mode~\textsc{Mode}$\in\{0,1\}$
with respect to local-oscillator phase~$\theta$; replaces~$\rho$ by some~$\ket{x}\bra{x}$ with probability~$\bra{x}\rho\ket{x}$}
}
\Ensure{
\Statex $\textsc{EstQMI}\in \mathbb{R}^+$ \Comment{Estimated QMI}
}
\Procedure{EstimateQMI}{$\epsilon$,\textsc{Tol},$T,\rho^{\otimes T},\sigma,\textsc{HomMeas}[\rho,x,\textsc{Mode},\theta]$}
\For{$i$ from~$1$ to~$2$}
\For{$j$ from~$1$ to~$2$}
\State $\textsc{CovRecon}\left[ij\right]\gets 0$  \Comment{Initialize covariance matrix for the players' reconstructed state including position-position, position-momentum, momentum-position and momentum-momentum}
\State $\textsc{CovRef}\left[ij\right]\gets 0$  \Comment{Initialize covariance matrix for the reference state including position-position, position-momentum, momentum-position and momentum-momentum}
\EndFor
\EndFor
\For{$i$ from~$1$ to~$4$}
\State $\textsc{HomResult}\left[i\right]\gets 0$
\Comment{Initialize vector comprising sums of in-phase
    and out-of-phase homodyne measurements of modes~0 and~1}
\For{$j$ from~$1$ to~$4$}
\State $\textsc{CovRecRef}\left[ij\right]\gets 0$  \Comment{Initialize joint reconstructed-reference covariance matrix including position-position, position-momentum, momentum-position and momentum-momentum}
\State \textsc{SecondMom}$\left[ij\right]\gets 0$
\Comment{Second-moment matrix defined in Eq.~(\ref{eq:matrixG})}
\EndFor
\EndFor
\State $\varepsilon \gets \left\lceil\sigma\sqrt{\frac{1}{l\left(1-\left(1-\textsc{Tol}\right)^{1/14}\right)}}\right\rceil$\Comment{Maximum estimation error of measurement results expectation values with a least probability~$\textsc{Tol}$}
\algstore{myalg}
\end{algorithmic}
\end{algorithm}
\begin{algorithm}                     
\begin{algorithmic}           
\algrestore{myalg}
\State $l\gets0$\Comment{Number of times that the referee performs the sufficiency test}
\State $\textsc{Rho}\gets\rho$ \Comment{Initialize $\textsc{Rho}$ to the first of input $\rho^{\otimes T}$}
\State $\epsilon_\text{QMI}\gets 2\epsilon$
\Comment{Initialize to any value greater than~$\epsilon$}
\For{$r$ from~$1$ to~$T$}
\While{$\epsilon_\text{QMI}>\epsilon$}
\State $l\gets l+1$
\Comment{Increment the sufficiency-test counter}
\If{$14l>T$}
\Comment{Referee measures 14 copies before ascertaining sufficiency}
\Return Fail
\State \textsc{Exit}
\Comment{Abort procedure if fewer than 14 copies remain}
\EndIf
\If{$r-1$ \textsc{mod} $14$=0}    \Comment{Measure one of~$T$ copies of~$\rho$}
\State Call \textsc{HomMeas}$(\textsc{Rho},x,0,0)$
\Comment{In-phase homodyne measurement of the reconstructed state}
\State $\textsc{HomResult}[1]\gets\textsc{HomResult}[1]+x$
\Comment{Sum detection outcomes}
\ElsIf{$r-2$ \textsc{mod} $14$=0 }\Comment{Measure one of~$T$ copies of~$\rho$}
\State Call \textsc{HomMeas}$(\textsc{Rho},x,0,\frac{\pi}{2})$ \Comment{Out-of-phase homodyne measurement of the reconstructed state}
\State $\textsc{HomResult}[2]\gets \textsc{HomResult}[2]+x$\Comment{Sum detection outcomes}
\ElsIf{$r-3$ \textsc{mod} $14$=0}\Comment{Measure one of~$T$ copies of~$\rho$}
\State Call \textsc{HomMeas}$(\textsc{Rho},x,1,0)$ \Comment{In-phase homodyne measurement of the reference state}
\State $\textsc{HomResult}[3]\gets \textsc{HomResult}[3]+x$\Comment{Sum detection outcomes}
\ElsIf{$r-4$ \textsc{mod} $14$=0}\Comment{Measure one of~$T$ copies of~$\rho$}
\State Call \textsc{HomMeas}$(\textsc{Rho},x,1,\frac{\pi}{2})$ \Comment{Out-of-phase homodyne measurement of the reference state}
\State $\textsc{HomResult}[4]\gets \textsc{HomResult}[4]+x$\Comment{Sum detection outcomes}
\ElsIf{$r-5$ \textsc{mod} $14$=0}\Comment{Measure one of~$T$ copies of~$\rho$}
\State Call \textsc{HomMeas}$(\textsc{Rho},x,0,0)$ \Comment{In-phase homodyne measurement of the reconstructed state}
\State $\textsc{SecondMom}[11]\gets\textsc{SecondMom}[11]+2x^2$
\ElsIf{$r-6$ \textsc{mod} $14$=0}\Comment{Measure one of~$T$ copies of~$\rho$}
\State Call \textsc{HomMeas}$(\textsc{Rho},x,0,0)$ \Comment{In-phase homodyne measurement of the reconstructed state}
\State $y \gets x$
\State Call \textsc{HomMeas}$(\textsc{Rho},x,1,0)$ \Comment{In-phase homodyne measurement of the reference state}
\algstore{myalg}
\end{algorithmic}
\end{algorithm}
\begin{algorithm}                     
\begin{algorithmic}           
\algrestore{myalg}
\State $\textsc{SecondMom}[13]\gets\textsc{SecondMom}[13]+2xy$
\State $\textsc{SecondMom}[31]\gets\textsc{SecondMom}[13]$
\ElsIf{$r-7$ \textsc{mod} $14$=0}\Comment{Measure one of~$T$ copies of~$\rho$}
\State Call \textsc{HomMeas}$(\textsc{Rho},x,0,0)$ \Comment{In-phase homodyne measurement of the reconstructed state}
\State $y \gets x$
\State Call \textsc{HomMeas}$(\textsc{Rho},x,1,\frac{\pi}{2})$ \Comment{Out-of-phase homodyne measurement of the reference state}
\State $\textsc{SecondMom}[14]\gets\textsc{SecondMom}[14]+2xy$
\State $\textsc{SecondMom}[41]\gets\textsc{SecondMom}[14]$
\ElsIf{$r-8$ \textsc{mod} $14$=0} \Comment{Measure one of~$T$ copies of~$\rho$}
\State Call \textsc{HomMeas}$(\textsc{Rho},x,0,\frac{\pi}{2})$ \Comment{Out-of-phase homodyne measurement of the reconstructed state}
\State $\textsc{SecondMom}[22]\gets\textsc{SecondMom}[22]+2x^2$
\ElsIf{$r-9$ \textsc{mod} $14$=0}\Comment{Measure one of~$T$ copies of~$\rho$}
\State Call \textsc{HomMeas}$(\textsc{Rho},x,0,\frac{\pi}{2})$\Comment{Out-of-phase homodyne measurement of the reconstructed state} 
\State $y\gets x$
\State Call \textsc{HomMeas}$(\textsc{Rho},x,1,0)$\Comment{In-phase homodyne measurement of the reference state}
\State $\textsc{SecondMom}[23]\gets\textsc{SecondMom}[23]+2xy$
\State $\textsc{SecondMom}[32]\gets\textsc{SecondMom}[23]$
\ElsIf{$r-10$ \textsc{mod} $14$=0}\Comment{Measure one of~$T$ copies of~$\rho$}
\State Call \textsc{HomMeas}$(\textsc{Rho},x,0,\frac{\pi}{2})$ \Comment{Out-of-phase homodyne measurement of the reconstructed state}
\State $y\gets x$
\State Call \textsc{HomMeas}$(\textsc{Rho},x,1,\frac{\pi}{2})$ \Comment{Out-of-phase homodyne measurement of the reference state}
\State $\textsc{SecondMom}[24]\gets\textsc{SecondMom}[24]+2xy$
\State $\textsc{SecondMom}[42]\gets\textsc{SecondMom}[24]$
\ElsIf{$r-11$ \textsc{mod} $14$=0}\Comment{Measure one of~$T$ copies of~$\rho$}
\State Call \textsc{HomMeas}$(\textsc{Rho},x,1,0)$ \Comment{In-phase homodyne measurement of the reference state}
\State $\textsc{SecondMom}[33]\gets\textsc{SecondMom}[33]+2x^2$
\ElsIf{$r-12$ \textsc{mod} $14$=0}\Comment{Measure one of~$T$ copies of~$\rho$}
\State Call \textsc{HomMeas}$(\textsc{Rho},x,1,\frac{\pi}{2})$ \Comment{Out-of-phase homodyne measurement of the reference state}
\State $\textsc{SecondMom}[44]\gets\textsc{SecondMom}[44]+2x^2$
\ElsIf{$r-13$ \textsc{mod} $14$=0 }\Comment{Measure one of~$T$ copies of~$\rho$}
\algstore{myalg}
\end{algorithmic}
\end{algorithm}
\begin{algorithm}                     
\begin{algorithmic}           
\algrestore{myalg}
\State Call \textsc{HomMeas}$(\textsc{Rho},x,0,\frac{\pi}{4})$\Comment{Homodyne measurement of the reconstructed state with respect to local-oscillator phase~$\frac{\pi}{4}$} 
\State $\textsc{SecondMom}[12]=2x^2-\textsc{SecondMom}[11]^2-\textsc{SecondMom}[22]^2$
\State $\textsc{SecondMom}[21]\gets\textsc{SecondMom}[12]$
\Else{$r-14$ \textsc{mod} $14$=0 }\Comment{Measure one of~$T$ copies of~$\rho$}
\State Call \textsc{HomMeas}$(\textsc{Rho},x,1,\frac{\pi}{4})$ \Comment{Homodyne measurement of the reference state with respect to local-oscillator phase~$\frac{\pi}{4}$}
\State $\textsc{SecondMom}[34]=2x^2-\textsc{SecondMom}[33]^2-\textsc{SecondMom}[44]^2$
\State $\textsc{SecondMom}[43]\gets\textsc{SecondMom}[34]$
\EndIf
\For{$i$ from~$1$ to~$4$}
\For{$j$ from~$i$ to~$4$}
\State $\textsc{CovRecRef}\left[ij\right]\gets \frac{1}{l}\left(\textsc{SecondMom}[ij]-\textsc{HomResult}[i]
    \textsc{HomResult}[j]\right)$  
\State $\textsc{CovRecRef}\left[ij\right]\gets \textsc{CovRecRef}\left[ji\right]$
\EndFor
\EndFor
\For{$i$ from~$1$ to~$2$}
\For{$j$ from~$1$ to~$2$}
\State $\textsc{CovRecon}\left[ij\right]\gets \textsc{CovRecon}\left[ij\right]$
\State $\textsc{CovRef}\left[ij\right]\gets \textsc{CovRecon}\left[i+2j+2\right]$
\EndFor
\EndFor
\begin{equation}
\label{eq=epsilonmax1}
\varepsilon_\text{max}\gets\frac{\varepsilon}{l}
\text{max}_{ij}\sqrt{1+\left(\textsc{HomResult}\left[i\right]\right)^2+\left(\textsc{HomResult}\left[j\right]\right)^2}
\end{equation}
\begin{align}
    \label{eq:epsilonmax}
  \varepsilon_\text{max}  \gets&\text{max}\bigg\{\varepsilon_\text{max},\frac{\varepsilon}{l}\sqrt{4+\left(\textsc{HomResult}\left[1\right]\right)^2+\left(\textsc{HomResult}\left[2\right]\right)^2},\\
  &\frac{\varepsilon}{l}\sqrt{4+\left(\textsc{HomResult}\left[3\right]\right)^2+\left(\textsc{HomResult}\left[4\right]\right)^2}\bigg\}
\end{align}\Comment{Via standard error propagation method}

\State $\epsilon_\text{QMI}\gets\sum_{\text{Q=Rp,p,R}}\rm{H_{\text{vN},\text{error}}^\text{upper}}\big(\bm{V}^\text{e,\text{Q}},$$\varepsilon_\text{max}\big)$\Comment{See Algorithm~\ref{alg:Upperboundofestimationerrorofquantummutualinformation}}
\EndWhile
\EndFor
\Return $\textsc{EstQMI}\gets\sum_{\text{Q=R,p}}\rm{vonNeumannH}(\bm{V}^\text{e,\text{Q}})-\rm{vonNeumannH}(\bm{V}^\text{e,\text{Rp}})$\Comment{see Algorithm~\ref{alg:CVQENTROPY}}
\EndProcedure
\end{algorithmic}
\end{algorithm}
 \clearpage

\begin{algorithm}[H]
\caption{Certification of RQSS protocols.}
\label{alg:CertificationofRQSSprotocols}
\begin{algorithmic}
\Require{
\Statex $T\in\mathbb{N}$ \Comment{Number of trials for each instance}
\Statex $I_\text{T}^\mathcal{F}\in \mathbb{R}^+$ \Comment{Threshold quantum mutual information for the forbidden structure}
\Statex  $I_\text{T}^\mathcal{A}\in \mathbb{R}^+$ \Comment{Threshold quantum mutual information for all authorized structures}
\Statex  $\epsilon\in \mathbb{R}^+$ \Comment{Estimation error bound of estimated QMI}
\Statex $\textsc{Tol}\in\interval({0,1/2})$
\Comment{Maximum failure probability}
\Statex  $P\in \mathbb{N}$ \Comment{Cardinality of the set of players}
\Statex $\textsc{F}\left[J\right]\in\{0,1,2\}$\Comment{Returns~$J^\text{th}$ power set of players structure claimed by the dealer~(\ref{eq:eq64})}
\Statex $\bigotimes_{J=1}^{2^P-1}\rho_{J}^{\otimes T}\in \mathcal{B}\left(L^2(\mathbb{R}^{2^PT})\right)$\Comment{$\rho_{J}$ is the joint state for the reference and players' reconstructed state for~$J^\text{th}$ subset of players}
\Statex $\sigma \in \mathbb{R}^+$ \Comment{A uniform upper bound on the standard deviations of measurement results}
\Statex $\textsc{HomMeas}[\rho,x,\textsc{Mode},\theta]$ \Comment{Homodyne measurement on mode~\textsc{Mode}$\in\{0,1\}$
with respect to local-oscillator phase~$\theta$; replaces~$\rho$ by some~$\ket{x}\bra{x}$ with probability~$\bra{x}\rho\ket{x}$}
}
\Ensure{
\Statex   $b\in\{0,1\}$\Comment{Certify ($b=1$) or not certify ($b=0$)}
}
\Procedure{Certification}{$I_\text{T}^\mathcal{F},I_\text{T}^\mathcal{A},\epsilon,P,\bigotimes_{J=1}^{2^P-1} \rho_{J}^{\otimes T},\textsc{F}[J],\sigma,\textsc{Tol},\textsc{HomMeas}[\rho,x,\textsc{Mode},\theta]$}
\State $c\gets \textsc{F}\left[1\right]$\Comment{initialize the structure of power-set elements based on referees' test to~$\textsc{F}\left[1\right]$}
\State $\textsc{pass}\gets 0$ \Comment{initialize the number of power-set elements that pass the test}
\For{$J$ from $1$ to $2^P-1$}
\State $\textsc{EstQMI}\gets\textsc{EstimateQMI}\left(\epsilon,\textsc{Tol},T,\rho^{\otimes T}_J,\sigma,\textsc{HomMeas}[\rho,x,\textsc{Mode},\theta]\right)$\Comment{see Algorithm~\ref{alg:Estimation of quantum mutual information}.}
\If{\begin{equation} 
\label{eq:eq77}
\textsc{EstQMI}>I_\text{T}^\mathcal{A}+\epsilon,
\end{equation}}
\State $c\leftarrow2$
\ElsIf{\begin{equation}
\label{eq:eq79}
I_\text{T}^\mathcal{F}-\epsilon<\textsc{EstQMI}<I_\text{T}^\mathcal{A}+\epsilon, 
\end{equation}}
\algstore{myalg}
\end{algorithmic}
\end{algorithm}
\begin{algorithm}                     
\begin{algorithmic}           
\algrestore{myalg}
\State $c\leftarrow1$
\Else{}
\State $c\leftarrow 0$
\EndIf
\If{$c=\textsc{f}[J]$}
$\textsc{pass}\gets\textsc{pass}+1$
\Else \State \textsc{Exit}
\Comment{Halt}
\EndIf
\EndFor
\If{$\textsc{pass}=2^P$}
\State  $b\gets 1$.
\Else
\State  $b\gets 0$.
\EndIf
\State \Return $b$
\EndProcedure
\end{algorithmic}
\end{algorithm}
\clearpage
\begin{prop}
Algorithm~\ref{alg:Estimation of quantum mutual information} ensures
\begin{equation}
\label{eq:eq80}
\text{pr}\left[\left|I_\text{e}\left(X;R\right)-I\left(X;R\right)\right|\leq \epsilon_\text{QMI}\right]\geq1-\beta,
\end{equation}
and
\begin{equation}
	\epsilon_\text{QMI}\in O\left(\frac{1}{\sqrt{N}}\right)
\end{equation}
for~$N$ the number of rounds.
\end{prop}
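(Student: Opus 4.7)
The plan is to decompose the estimation pipeline of Algorithm~\ref{alg:Estimation of quantum mutual information} into four stages---sampling, covariance-matrix reconstruction, entropy evaluation, and mutual-information assembly---track the estimation error through each stage, and combine the individual failure probabilities with a union bound.

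First, I would handle the sampling stage. In each sufficiency iteration, the algorithm collects $\ell$ independent homodyne outcomes for each of the $14$ distinct measurement settings. Applying Chebyshev's inequality to each sample mean, the probability that it deviates from its true expectation by more than $\varepsilon$ is at most $\sigma^{2}/(\ell\varepsilon^{2})$. The algorithm sets $\varepsilon=\sigma/\sqrt{\ell(1-(1-\textsc{Tol})^{1/14})}$, so each per-setting failure probability is exactly $1-(1-\textsc{Tol})^{1/14}$, and a union bound over the $14$ settings guarantees that all sample means lie within $\varepsilon$ of their true expectations with probability at least $1-\textsc{Tol}$.

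Second, I would propagate this error through the covariance-matrix estimator~(\ref{eq:estimatecovariancematrix}). Each covariance entry is a polynomial of degree at most two in quantities of the form $\langle\bm{G}_{ij}\rangle$ and $\langle\bm{C}_i\rangle$, so standard first-order error propagation gives the entrywise bound $\varepsilon_{\max}$ defined in Eqs.~(\ref{eq=epsilonmax1})--(\ref{eq:epsilonmax}). Because $\varepsilon\in O(1/\sqrt{\ell})$, we obtain $\varepsilon_{\max}\in O(1/\sqrt{\ell})$ as well.

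Third---and this is where the hard work lies---I would bound the error in each von~Neumann entropy produced by Algorithm~\ref{alg:CVQENTROPY}. Perturbing $\bm{V}$ entrywise by at most $\varepsilon_{\max}$ perturbs $\mathrm{i}\bm{\Omega}\bm{V}$ in operator norm by $O(n\varepsilon_{\max})$, and Weyl's perturbation inequality transfers this into the same additive bound on the symplectic eigenvalues $\nu_k$. The scalar map $\nu\mapsto \nu^{+}\log\nu^{+}+\nu^{-}\log\nu^{-}$ is Lipschitz on compact subintervals of $[1,\infty)$, with Lipschitz constant growing like $1+\log\nu$ in the tail, while its sensitivity near $\nu=1$ is controlled by $1/\sigma_{\min}$, the reciprocal of the smallest singular value of $\bm{V}$. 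Combining these two factors yields the bound $H_{\text{vN},\text{error}}^{\text{upper}}=\kappa(1+\log(2n\sigma_{\max}))\,2n\varepsilon_{\max}$ returned by Algorithm~\ref{alg:Upperboundofestimationerrorofquantummutualinformation}. The main obstacle is rigorously justifying the joint appearance of the logarithmic factor and the condition number $\kappa$; I would do so by splitting the admissible $\nu$-range into a compact piece and a tail and invoking the mean-value theorem on each, using the fact that Gaussian physicality already forces $\nu_k\geq 1$.

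Fourth, I would assemble the mutual-information bound. Since $I(X;R)=H_{\text{vN}}(\rho^{X})+H_{\text{vN}}(\rho^{R})-H_{\text{vN}}(\rho^{XR})$, the triangle inequality gives $|I_{\text{e}}(X;R)-I(X;R)|\leq \epsilon_{\text{QMI}}$, where $\epsilon_{\text{QMI}}$ is precisely the sum of the three per-entropy upper bounds computed in the \textbf{while}-loop exit test. The loop terminates only once $\epsilon_{\text{QMI}}\leq\epsilon$, so identifying $\beta:=\textsc{Tol}$ yields~(\ref{eq:eq80}). For the scaling, every step above preserved $O(1/\sqrt{\ell})$ dependence, and since $N=14\ell$ physical rounds are consumed per sufficiency iteration, $\epsilon_{\text{QMI}}\in O(1/\sqrt{N})$, establishing the claim.
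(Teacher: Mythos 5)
Your four-stage decomposition---Chebyshev sampling bounds, error propagation to the covariance entries, an entropy-perturbation bound, and a triangle inequality over the three entropies---is exactly the architecture of the paper's proof, and stages one, two and four match it essentially step for step (the paper combines the 14 per-setting Chebyshev bounds using the iid assumption rather than a union bound, but both yield the required $1-\beta$). The one substantive divergence is stage three. The paper does not derive the entropy continuity estimate: it imports it as a known result, Eqs.~(\ref{eq:eq93})--(\ref{eq:eq94}), namely $\left|H_\text{vN}\left(\bm{V}_\text{A}\right)-H_\text{vN}\left(\bm{V}_\text{B}\right)\right|\leq\kappa\left(\bm{V}_\text{A}\right)K\|\bm{V}_\text{A}-\bm{V}_\text{B}\|_1$, and then only massages the constant $K$ via the elementary inequality~(\ref{eq:eq95}) to reach the form~(\ref{eq:eq97}) used by Algorithm~\ref{alg:Upperboundofestimationerrorofquantummutualinformation}. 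You instead propose to re-derive this bound from Weyl's inequality plus a Lipschitz analysis of $\nu\mapsto\nu^+\log\nu^++\nu^-\log\nu^-$, and you correctly flag this as the hard part---but your sketch understates the difficulty: the derivative of that scalar function behaves like $\tfrac{1}{2}\log\tfrac{\nu+1}{\nu-1}$ and diverges as $\nu\to1^+$, so the function is not Lipschitz on any neighbourhood of the physical boundary $\nu=1$, and a mean-value-theorem argument on a \enquote{compact piece plus tail} does not by itself produce the stated $\kappa\left(1+\log\left(2n\sigma_\text{max}\right)\right)$ prefactor; the interplay between the perturbation size and the distance of the symplectic spectrum from~$1$ is precisely what the cited continuity bound controls. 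If you want a self-contained proof you would need to reproduce that argument in full; otherwise, citing the bound as the paper does is what closes the gap. One further small point: for the $O\left(1/\sqrt{N}\right)$ claim you should also state explicitly, as the paper does via Weyl's perturbation bound for singular values, that $\kappa\left(\bm{V}^\text{e,Q}\right)$ and $\sigma_\text{max,e,Q}$ remain $O(1)$ as the number of rounds grows, since these quantities multiply $\varepsilon_\text{max}$ in $\epsilon_\text{QMI}$.
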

\begin{proof}
Using Chebyshev's inequality~\cite{aolita2015reliable},
\begin{align}
\label{eq:eq85}
	\text{pr}\left[\left|\bar{\bm{G}}_{ij}
		-\mathbb{E}\left(\bm{G}_{ij}\right)\right|
			\geq \epsilon\right]
	\leq&\frac{\sigma^2}{\epsilon^2l},\\
\label{eq:eq140}
\text{pr}\left[\left|\bar{\bm{C}}_i-\mathbb{E}\left(\bm{C}_i\right)\right|\geq \epsilon\right]
	\leq&\frac{\sigma^2}{\epsilon^2l}.
\end{align}
Equations~(\ref{eq:eq85}) and~(\ref{eq:eq140}) equivalently are
\begin{align}
\label{eq:eq170}
	\text{pr}\left[\left|\bar{\bm{G}}_{ij}
		-\mathbb{E}\left(\bm{G}_{ij}\right)\right|
			\leq\epsilon\right]
		\geq&1-\frac{\sigma^2}{\epsilon^2l},
	\\
\label{eq:eq171}
\text{pr}\left[\left|\bar{\bm{C}}_i
	-\mathbb{E}\left(\bm{C}_i\right)\right|
		\leq\epsilon\right]\geq&1-\frac{\sigma^2}{\epsilon^2l}.
\end{align}
Assigning
\begin{equation}
\label{eq:eq89}
	\epsilon
		\gets
			\left\lceil\frac{\sigma}
				{\sqrt{l\left(1-\left(1-\beta\right)^\frac{1}{14}\right)}}\right\rceil
\end{equation}
and assuming an independent identically distributed (iid) protocol delivers 
\begin{equation}
\text{pr}\left[\forall i,j:\left|\bar{\bm{C}}_i-\mathbb{E}\left(\bm{C}_i\right)\right|\wedge \left|\bar{\bm{G}}_{ij}-\mathbb{E}\left(\bm{G}_{ij}\right)\right|\leq \epsilon\right]\geq1-\beta.
\end{equation}
Let~$\epsilon_\text{max}$ be the maximum estimation error of estimated covariance matrix, which is calculated in terms of~$\epsilon$ (\ref{eq:eq89}) via standard error propagation methods. In the following we give an upper bound on the estimation error of quantum mutual information in terms of~$\epsilon_\text{max}$. In order to do so, we introduce some helpful notation and theorems used in our proofs.

For any two Gaussian states with corresponding covariance matrices  $\bm{V}_\text{A}$ and $\bm{V}_\text{B}$, the entropy difference is bounded by~\cite{IDEL201745}
\begin{equation}
\label{eq:eq93}
\left|H_\text{vN}\left(\bm V_\text{A}\right)-H_\text{vN}\left(\bm V_\text{B}\right)\right|\leq \kappa\left(\bm V_\text{A}\right)K\|\bm{V}_\text{A}-\bm{V}_\text{B}\|_1,
\end{equation}
for
\begin{equation}
\label{eq:eq94}
	K:=1+\log\left[\text{max}\left(\|\bm V_\text{A}\|_\infty,\frac{1}{2}\left(\|\bm V_\text{A}^{-1}\|_\infty^{-1}-1\right)\right)\right].
\end{equation}
Also 
\begin{equation}
\label{eq:eq95}
\norm{\bm{A}^{-1}}_\infty^{-1}\leq \norm{\bm{A}}_\infty,
\end{equation}
holds for any covariance matrix $\bm{A}$~\cite{GRCAR2010203}.
Hence,
\begin{equation}
\label{eq:eq96}
\frac{1}{2}\left(\norm{\bm A^{-1}}_\infty^{-1}-1\right)\leq \norm{\bm A}_\infty.
\end{equation}
By substituting~Eq.~(\ref{eq:eq96}) into~Eq.~(\ref{eq:eq94}),
we obtain the perturbation bound
\begin{equation}
\label{eq:eq97}
\left|H_\text{vN}\left(\bm V_\text{A}\right)-H_\text{vN}\left(\bm V_\text{B}\right)\right|\leq \kappa\left(\bm V_\text{A}\right)\|\bm{V}_\text{A}-\bm{V}_\text{B}\|_1\left(1+\log\left(\|\bm V_\text{A}\|_\infty\right) \right).
\end{equation}
For any $\text{Q}\in \{\text{R},\text{P},\text{RP}\}$, let~$\bm V^\text{e,Q}$ and~$\bm V^\text{Q}$ be the estimated and real covariance matrices, respectively. Then
\begin{equation}
\label{eq:eq98}
\norm{\bm V^\text{e,Q}}_\infty \leq \norm{\bm{U}}_\infty \norm{\bm{\Sigma}}_\infty \norm{\bm{V}}_\infty \leq \sigma_\text{max,e,Q}\dim{\bm V^\text{e,Q}}.
\end{equation}
Also
\begin{equation}
\label{eq:eq99}
\norm{\bm{V}^\text{Q}-\bm{V}^\text{e,Q}}_1\leq \varepsilon_\text{max}\dim{\bm V^\text{e,Q}}.
\end{equation}
Furthermore, let us define
\begin{equation}
\label{eq:eq100}
	\Delta{H_\text{vN}}\left(\text{Q}\right)
		:=H_\text{vN}\left(\bm{V}^\text{Q}\right)-H_\text{vN}\left(\bm{V}^\text{e,Q}\right),
\end{equation}
and
\begin{equation}
\label{eq:eq101}
    \Delta I\left(X\right)=I\left(X;R\right)-I_\text{e}\left(X;R\right).
\end{equation}
Thus,
\begin{equation}
\label{eq:102}
\Delta I\left(X\right)=\Delta{H_\text{vN}}\left(\text{X}\right)+\Delta{H_\text{vN}}\left(\text{R}\right)-\Delta{H_\text{vN}}\left(\text{RX}\right)
\end{equation}
Due to the triangle inequality,
\begin{align}
\label{eq:103}
\left|\Delta I\left(X\right)\right|\leq \left|\Delta{H_\text{vN}}\left(\text{X}\right)\right|+\left|\Delta{H_\text{vN}}\left(\text{R}\right)\right|+\left|\Delta{H_\text{vN}}\left(\text{RX}\right)\right|.
\end{align}
Each of the terms in the right-hand side of Eq.~(\ref{eq:103}) is suitably achieved by using Eq.~(\ref{eq:eq97}). Substituting Eqs.~(\ref{eq:eq99}) and~(\ref{eq:eq100}) into the resultant equation delivers Eq.~(\ref{eq:eq80}).

Now we show that~$\epsilon_\text{QMI}$ scales properly with respect to number of rounds. Using the Weyl~\cite{Weyl1912} perturbation bound for singular value decomposition, we conclude
\begin{equation}
	\kappa\left(\bm{V}^\text{e,Q}\right),\sigma_\text{max,e,Q}\in O(1),\; 
	\varepsilon_\text{max}\in O\left(\frac{1}{\sqrt{N}}\right).
\end{equation}
Therefore, the error bound scales inversely with square root of the number of rounds. Next we prove the algorithm~\ref{alg:CertificationofRQSSprotocols} is both sound and complete.
\end{proof}

\begin{prop}
\begin{enumerate}[(i)]
\item If $X\in \mathcal{A}$, Algorithm~\ref{alg:CertificationofRQSSprotocols} passes with probability at least $1-\beta$ and
\item	if $X\notin \mathcal{A}$ then the algorithm fails with probability at least $1-\beta$.
\end{enumerate}
\end{prop}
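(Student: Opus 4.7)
The plan is to reduce both parts of the claim to the concentration guarantee
\begin{equation*}
\text{pr}\bigl[|I_\text{e}(X;R)-I(X;R)|\leq\epsilon\bigr]\geq 1-\beta
\end{equation*}
established in the preceding proposition, combined with the completeness and soundness conditions for~$\mathcal{A}$, $\mathcal{F}$ and~$\mathcal{I}$ introduced in~\P\ref{para:RQSSCA},~\P\ref{para:RQSSCFFS} and~\P\ref{para:RQSSIS}. I work under the standing assumption $\delta>2\epsilon$, which is the separation that Algorithm~\ref{alg:CertificationofRQSSprotocols} implicitly requires so that the branches~(\ref{eq:eq77}) and~(\ref{eq:eq79}) and the final \textsc{Else} branch remain mutually consistent once the $\pm\epsilon$ estimation slack is placed around~$I_\text{T}^\mathcal{A}$ and~$I_\text{T}^\mathcal{F}$.

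For part~(i), assume $X\in\mathcal{A}$. The completeness clause~(\ref{eq:completenessqulified}) gives $I(X;R)\geq I_\text{T}^\mathcal{A}+\delta$; conditioning on the high-probability event $I_\text{e}(X;R)\geq I(X;R)-\epsilon$ then yields $I_\text{e}(X;R)\geq I_\text{T}^\mathcal{A}+\delta-\epsilon>I_\text{T}^\mathcal{A}+\epsilon$, which is exactly the predicate~(\ref{eq:eq77}) tested inside Algorithm~\ref{alg:CertificationofRQSSprotocols}. Hence $c\gets 2$ is executed and matches the dealer's announcement $\textsc{F}[J]=2$, so the \textsc{pass} counter is incremented and the algorithm does not abort. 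This occurs with probability at least $1-\beta$, which is the assertion of part~(i).

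For part~(ii), assume $X\notin\mathcal{A}$, so that $X\in\mathcal{F}\cup\mathcal{I}$. If $X\in\mathcal{F}$, the soundness clause~(\ref{eq:completenessforbiden}) gives $I(X;R)\leq I_\text{T}^\mathcal{F}-\delta$, and the concentration bound upgrades this to $I_\text{e}(X;R)\leq I_\text{T}^\mathcal{F}-\delta+\epsilon<I_\text{T}^\mathcal{F}-\epsilon$ with probability at least $1-\beta$; neither~(\ref{eq:eq77}) nor~(\ref{eq:eq79}) fires, forcing $c\gets 0$. If $X\in\mathcal{I}$, the definition of the intermediate structure~(\ref{eq:completenessintermediae}) together with the concentration bound places $I_\text{e}(X;R)$ strictly inside $(I_\text{T}^\mathcal{F}-\epsilon,\,I_\text{T}^\mathcal{A}+\epsilon)$ with probability at least $1-\beta$, so~(\ref{eq:eq79}) fires and $c\gets 1$. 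In both subcases $c\neq 2$, while $\textsc{F}[J]=2$ is what the certification is being asked to confirm, so the equality test $c=\textsc{F}[J]$ fails and the algorithm exits, i.e.\ fails, with probability at least $1-\beta$.

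The one step I expect to require genuine care is the error-budget bookkeeping. The symbol~$\epsilon$ appearing in the decision predicates~(\ref{eq:eq77})--(\ref{eq:eq79}) must coincide with the $\epsilon_\text{QMI}$ produced by Algorithm~\ref{alg:Upperboundofestimationerrorofquantummutualinformation}, which is itself controlled via $\varepsilon_\text{max}$ through the Chebyshev-based assignment~(\ref{eq:eq89}) and the propagation expressions~(\ref{eq=epsilonmax1})--(\ref{eq:epsilonmax}). Verifying that the while-loop exit condition ``$\epsilon_\text{QMI}\leq\epsilon$'' genuinely realises the slack assumed in the case analysis, and that the fourteen per-moment failure probabilities absorbed into $(1-\textsc{Tol})^{1/14}$ combine by the iid union bound to a total failure probability no greater than~$\beta$, is the most delicate ingredient. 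Once these calibration checks are in place, the logical case analysis above is immediate.
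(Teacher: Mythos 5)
Your proposal is correct and follows essentially the same route as the paper: both parts reduce to the concentration guarantee $\text{pr}[|I_\text{e}(X;R)-I(X;R)|\leq\epsilon]\geq 1-\beta$ from the preceding proposition combined with the gap condition $\delta-\epsilon\geq\epsilon$, so that the estimate lands on the correct side of $I_\text{T}^\mathcal{A}+\epsilon$. The only cosmetic difference is that in part~(ii) you split into the subcases $X\in\mathcal{F}$ and $X\in\mathcal{I}$, whereas the paper invokes the single soundness condition $I(X;R)<I_\text{T}^\mathcal{A}$ directly; your closing remarks on calibrating $\epsilon$ against $\epsilon_\text{QMI}$ correctly identify bookkeeping that the paper leaves implicit.
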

\begin{proof}
We show cases~(i) and~(ii) in sequence.\\
Case~(i):
We first recall that
\begin{equation}
\label{eq:106}
X\in \mathcal{A}\implies I\left(X;R\right)\geq I_\text{T}^\mathcal{A}+\delta.
\end{equation}
Also
\begin{equation}
\label{eq:eq107}
\text{pr}\left[\left|I\left(X;R\right)-I_\text{e}\left(X;R\right)\right|\leq\epsilon\right]\geq1-\beta.
\end{equation}
Therefore,
\begin{equation}
\label{eq:108}
\text{pr}\left[I_\text{T}^\mathcal{A}+\delta-\epsilon\leq I_\text{e}\left(X;R\right)\right]\geq1-\beta.
\end{equation}
As~$\delta-\epsilon\geq \epsilon$, we conclude
\begin{equation}
\label{eq:109}
\text{pr}\left[I_\text{T}^\mathcal{A}+\epsilon\leq I_\text{e}\left(X;R\right)\right]\geq1-\beta.
\end{equation}
Thus, Algorithm~\ref{alg:CertificationofRQSSprotocols} accepts with probability 
at least $1-\beta$ if $X\in \mathcal{A}$.\\
Case~(ii):
We note that
\begin{equation}\label{eq:eq110}
\text{pr}\left[ I_\text{e}\left(X;R\right)-\epsilon\leq I\left(X;R\right)\right]\geq1-\beta
\end{equation}
Therefore, substituting Eq.~(\ref{eq:soundnessqualified}) into Eq.~(\ref{eq:eq110}) delivers
\begin{equation}
\label{eq:111}
\text{pr}\left[I_\text{e}\left(X;R\right)<I_\text{T}^\mathcal{A}+\epsilon\right]\geq1-\beta.
\end{equation}
Thus, Algorithm~\ref{alg:CertificationofRQSSprotocols} rejects with probability at least $1-\beta$ if $X\notin \mathcal{A}$.
\end{proof}
\begin{prop}
\begin{enumerate}[(i)]
\item 	If $X\in \mathcal{F}$, then Algorithm~\ref{alg:CertificationofRQSSprotocols} accepts with probability at least $1-\beta$ and
\item	if $X\notin \mathcal{F}$ then Algorithm~\ref{alg:CertificationofRQSSprotocols} rejects with probability at least $1-\beta$.
\end{enumerate}
\end{prop}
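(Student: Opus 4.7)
The plan is to mirror the argument used for the previous proposition about~$\mathcal{A}$, swapping the roles of the thresholds and inequalities: the required ingredients are the estimation-error guarantee~(\ref{eq:eq80}) from Algorithm~\ref{alg:Estimation of quantum mutual information}, together with the completeness condition~(\ref{eq:completenessforbiden}) for~$X\in\mathcal{F}$ and the soundness condition~(\ref{eq:sound1}) for~$X\notin\mathcal{F}$. As in the $\mathcal{A}$-case, I will use the tacit gap condition $\delta\geq 2\epsilon$, which ensures that thresholds shifted by~$\epsilon$ still separate the QMI values corresponding to different structures, and I will read off the decision rule of Algorithm~\ref{alg:CertificationofRQSSprotocols} that assigns~$c\gets 0$ exactly when $\textsc{EstQMI}\leq I_\text{T}^\mathcal{F}-\epsilon$.

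For case~(i), I would start from $X\in\mathcal{F}$, which by~(\ref{eq:completenessforbiden}) gives $I(X;R)\leq I_\text{T}^\mathcal{F}-\delta$. Combining with the upper half of~(\ref{eq:eq80}), namely $I_\text{e}(X;R)\leq I(X;R)+\epsilon$ with probability at least~$1-\beta$, yields
\begin{equation}
I_\text{e}(X;R)\leq I_\text{T}^\mathcal{F}-\delta+\epsilon\leq I_\text{T}^\mathcal{F}-\epsilon,
\end{equation}
where the final inequality invokes $\delta\geq 2\epsilon$. Thus the \textbf{else} branch of Algorithm~\ref{alg:CertificationofRQSSprotocols} is taken and $c\gets 0$, which matches $\textsc{F}[J]=0$ announced by the dealer, so that instance passes the test with probability at least~$1-\beta$.

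For case~(ii), $X\notin\mathcal{F}$ gives, via~(\ref{eq:sound1}), $I(X;R)>I_\text{T}^\mathcal{F}$. The lower half of~(\ref{eq:eq80}), $I_\text{e}(X;R)\geq I(X;R)-\epsilon$, then yields
\begin{equation}
I_\text{e}(X;R)>I_\text{T}^\mathcal{F}-\epsilon
\end{equation}
with probability at least~$1-\beta$. Consequently the algorithm does not take the \textbf{else} branch, assigning instead $c\in\{1,2\}$, so the dealer's claim that $X\in\mathcal{F}$ (that is, $\textsc{F}[J]=0$) is rejected at that instance with probability at least~$1-\beta$.

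The proof itself is essentially bookkeeping with the estimation bound; the only real subtlety is the gap assumption $\delta\geq 2\epsilon$, which is what allows the two $\epsilon$-wide bands around the thresholds to be traversed in the correct direction in both cases. I expect this to be the place where I must be most explicit, since the statement of the proposition does not make the relationship between~$\delta$ and~$\epsilon$ visible, and it is the inheritance of this same gap choice from the $\mathcal{A}$-certification proof that keeps the algorithm's three-way decision rule consistent across forbidden, intermediate and authorized structures.
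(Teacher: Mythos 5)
Your proof is correct and follows essentially the same route as the paper's: both cases combine the estimation guarantee with the threshold conditions~(\ref{eq:completenessforbiden}) and~(\ref{eq:sound1}) and invoke the gap assumption $\delta-\epsilon\geq\epsilon$ exactly as the paper does. Your extra step of tracing the resulting inequalities through the branch assignments of Algorithm~\ref{alg:CertificationofRQSSprotocols} only makes explicit what the paper leaves implicit.
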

\begin{proof}
We show cases~(i) and~(ii) in sequence.\\
Case~(i):
\begin{equation}
\label{eq:eq112}
	X\in \mathcal{F}\implies I\left(X;R\right)\leq I_\text{T}^\mathcal{F}-\delta.
\end{equation}
Also
\begin{equation}
\label{eq:eq113}
\text{pr}\left[\left|I\left(X;R\right)-I_\text{e}\left(X;R\right)\right|\leq\epsilon\right]\geq1-\beta.
\end{equation}
Therefore,
\begin{equation}
\label{eq:eq114}
\text{pr}\left[I_\text{e}\left(X;R\right)\leq I\left(X;R\right)+\epsilon\right]\geq1-\beta.
\end{equation}
Substituting Eq.~(\ref{eq:eq112}) in Eq.~(\ref{eq:eq114}) delivers
\begin{equation}
\label{eq:eq115}
\text{pr}\left[I_\text{e}\left(X;R\right)\leq I_\text{T}^\mathcal{F}-\delta+\epsilon\right]\geq1-\beta.
\end{equation}
As~$\delta-\epsilon\geq \epsilon$, we conclude
\begin{equation}
\label{eq:eq116}
\text{pr}\left[I_\text{e}\left(X;R\right)\leq I_\text{T}^\mathcal{F}-\epsilon\right]\geq1-\beta.
\end{equation}
Thus, Algorithm~\ref{alg:CertificationofRQSSprotocols} accepts with probability at least $1-\beta$ if $X\in \mathcal{F}$.\\
Case (ii):
\begin{equation}
\label{eq:eq117}
\text{pr}\left[I\left(X;R\right)-\epsilon\leq I_\text{e}\left(X;R\right)\right]\geq1-\beta.
\end{equation}
Substituting Eq.~(\ref{eq:sound1}) into Eq.~(\ref{eq:eq114}) delivers
\begin{equation}
\label{eq:eq118}
\text{pr}\left[I_\text{T}^\mathcal{F}-\epsilon\leq I_\text{e}\left(X;R\right)\right]\geq1-\beta.
\end{equation}
Thus, Algorithm~\ref{alg:CertificationofRQSSprotocols} rejects with probability at least $1-\beta$ if $X\notin \mathcal{F}$.
\end{proof}
\begin{prop}
\begin{enumerate}[(i)]
\item 	If $X\in \mathcal{I}$, then Algorithm~\ref{alg:CertificationofRQSSprotocols} accepts with probability at least $1-\beta$ and
\item	$X\notin \mathcal{I}$ then Algorithm~\ref{alg:CertificationofRQSSprotocols} rejects with probability at least $1-\beta$.
\end{enumerate}
\end{prop}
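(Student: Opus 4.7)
The plan is to follow the same template established by the two preceding propositions for~$\mathcal{A}$ and~$\mathcal{F}$, leveraging the concentration bound from Proposition~1 together with the membership conditions~(\ref{eq:soundnessintermediate1}),~(\ref{eq:soundnessintermediate2}) and~(\ref{eq:completenessintermediae}). Throughout, I assume the standing margin condition $\delta-\epsilon\geq\epsilon$, and I invoke the inequality
\begin{equation}
\text{pr}\bigl[\left|I(X;R)-I_\text{e}(X;R)\right|\leq\epsilon\bigr]\geq 1-\beta
\end{equation}
exactly as in Eq.~(\ref{eq:eq107}), which is the single probabilistic ingredient in both cases.

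For case~(i), I start from the completeness condition $I_\text{T}^\mathcal{F}<I(X;R)<I_\text{T}^\mathcal{A}$ for $X\in\mathcal{I}$. Conditioning on the event that the estimation error is at most~$\epsilon$, I obtain $I_\text{T}^\mathcal{F}-\epsilon<I_\text{e}(X;R)<I_\text{T}^\mathcal{A}+\epsilon$, which is precisely the intermediate branch~(\ref{eq:eq79}) of Algorithm~\ref{alg:CertificationofRQSSprotocols} that assigns $c\gets 1$. Because this matches the dealer's announced label $\textsc{F}[J]=1$ for an intermediate subset, the $\textsc{pass}$ counter increments, so the algorithm accepts with probability at least $1-\beta$.

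For case~(ii), since $X\notin\mathcal{I}$, the soundness hypothesis is a disjunction, so I split into two sub-cases. In the first sub-case $I(X;R)\leq I_\text{T}^\mathcal{F}-\delta$; combining with $I_\text{e}(X;R)\leq I(X;R)+\epsilon$ and using $\delta-\epsilon\geq\epsilon$ yields $I_\text{e}(X;R)\leq I_\text{T}^\mathcal{F}-\epsilon$, so the algorithm falls through to the \textbf{Else} branch and sets $c\gets 0$, which disagrees with $\textsc{F}[J]=1$ and triggers the \textsc{Exit}. In the second sub-case $I(X;R)\geq I_\text{T}^\mathcal{A}+\delta$; the symmetric argument gives $I_\text{e}(X;R)\geq I_\text{T}^\mathcal{A}+\epsilon$, so condition~(\ref{eq:eq77}) fires and the algorithm assigns $c\gets 2$, again disagreeing with $\textsc{F}[J]=1$ and halting. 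In either sub-case, rejection occurs with probability at least $1-\beta$.

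The proof is essentially routine once Proposition~1 is in hand; the only real subtlety, and the chief obstacle to getting the bookkeeping right, is that~$\mathcal{I}$ is the unique two-sided structure, so the soundness step must handle the disjunction explicitly and confirm that both tails of the estimation interval push $\textsc{EstQMI}$ out of the intermediate window~(\ref{eq:eq79}) in a way that is consistent with the margin $\delta\geq 2\epsilon$ implicitly used in the earlier propositions. No new probabilistic argument is required beyond the Chebyshev-based bound of Proposition~1.
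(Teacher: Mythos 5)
Your proof is correct and follows essentially the same route as the paper: case~(i) conditions on the $\epsilon$-accuracy event to land $I_\text{e}(X;R)$ in the window~(\ref{eq:eq79}), and case~(ii) uses the one-sided bounds together with $\delta-\epsilon\geq\epsilon$ to push the estimate out of that window. The only difference is cosmetic — you treat the soundness disjunction as two explicit sub-cases, whereas the paper derives the two conclusions~(\ref{eq:eq127}) and~(\ref{eq:eq128}) in parallel; your version is slightly more careful in its bookkeeping but introduces no new idea.
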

\begin{proof}
We show cases~(i) and~(ii) in sequence.\\
Case~(i):
\begin{equation}
\label{eq:eq119}
	X\in \mathcal{I}\implies I_\text{T}^\mathcal{F}<I\left(X;R\right)<I_\text{T}^\mathcal{A}.
\end{equation}
Also
\begin{equation}
\label{eq:eq120}
\text{pr}\left[\left|I\left(X;R\right)-I_\text{e}\left(X;R\right)\right|\leq\epsilon\right]\geq1-\beta.
\end{equation}
Therefore,
\begin{align}
\label{eq:eq121}
\text{pr}\left[ I\left(\text{X};\text{R}\right)-\epsilon\leq I_\text{e}\left(\text{X};\text{R}\right)\leq I\left(\text{X};\text{R}\right)+\epsilon\right]\geq1-\beta.
\end{align}
Substituting Eq.~(\ref{eq:eq119}) into Eq.~(\ref{eq:eq121}) delivers
\begin{align}
\label{eq:eq122}
\text{pr}\left[ I_\text{T}^\mathcal{F}-\epsilon\leq I_\text{e}\left(X;R\right)\leq I_\text{T}^\mathcal{A}+\epsilon\right]\geq1-\beta.
\end{align}
Thus, Algorithm~\ref{alg:CertificationofRQSSprotocols} accepts with probability at least $1-\beta$ if $X\in \mathcal{I}$.\\
Case (ii):
\begin{align}
\label{eq:eq123}
\text{pr}\left[ I_\text{e}\left(X;R\right)-\epsilon\leq I\left(X;R\right)\right]\geq1-\beta,
\end{align}
and
\begin{align}
\label{eq:eq124}
\text{pr}\left[I\left(X;R\right)\leq I_\text{e}\left(X;R\right)+\epsilon\right]\geq1-\beta.
\end{align}
Substituting Eq.~(\ref{eq:soundnessintermediate1})  and Eq.~(\ref{eq:soundnessintermediate2}) into Eq.~(\ref{eq:eq123}) and Eq.~(\ref{eq:eq124}), respectively,
delivers
\begin{align}
\label{eq:eq125}
\text{pr}\left[ I_\text{e}\left(X;R\right)\leq I_\text{T}^\mathcal{F}-\delta+\epsilon\right]\geq1-\beta,
\end{align}
and
\begin{align}
\label{eq:eq126}
\text{pr}\left[I_\text{T}^\mathcal{A}\leq I_\text{e}\left(X;R\right)-\delta+\epsilon\right]\geq1-\beta.
\end{align}
As~$\delta-\epsilon\geq \epsilon$, we conclude
\begin{align}
\label{eq:eq127}
\text{pr}\left[ I_\text{e}\left(X;R\right)\leq I_\text{T}^\mathcal{F}-\epsilon\right]\geq1-\beta,
\end{align}
and
\begin{align}
\label{eq:eq128}
\text{pr}\left[I_\text{T}^\mathcal{A}+\epsilon\leq I_\text{e}\left(X;R\right)\right]\geq1-\beta.
\end{align}
Thus, Algorithm~\ref{alg:CertificationofRQSSprotocols} rejects with probability at least $1-\beta$ if $X\notin \mathcal{I}$.
\end{proof}
\section{DISCUSSION}
\label{sec:Discussion} 
In this section we discuss our results. We have two main results.
The first result is a security analysis,
which assigns subsets of players to each of the three structures, namely, authorized, intermediate, and forbidden structures.
The second result is certification,
which is performed by a referee. In our security analysis, we not only determine structures for subset of players,
but we also quantify information leakage.
For certification we introduce a referee who has limited resources 
such as finite local oscillator field.
We now discuss these two results.

We base our approach on TRS03,
which divides subsets of players into authorized and forbidden structures.
TRS03 do not consider the intermediate structure because their security analysis is based on assuming infinite squeezing, but finite squeezing is responsible for information leakage, which leads us to introduce the intermediate structure based on ramp secret sharing concepts. Ramp quantum secret sharing has been considered before in two cases: discrete-variable threshold ramp quantum secret sharing~\cite{PhysRevA.72.032318} and entanglement sharing~\cite{ChoiRanHee2013Espv}. These analysis did not treat the continuous-variable case, however. In our case, for any amount of finite squeezing, we construct encoding and decoding procedures and thereby assign each subset to the correct structure.

Now we describe our result for certification. In our protocol, the dealer supplies the players with the encoded state, and in fact the state would be entangled with another share that goes directly to the referee. The referee identifies which subset of players are to transmit the decoded state to the referee, and the referee can combine this state with any shares that did not go through the players and then performs homodyne detection~\cite{0305-4470-37-29-010,aolita2015reliable}. The referee performs homodyne measurement, and,
if the local oscillator strength is infinite,
then standard homodyne theory suffices to describe the statistics.
We study the particular case of the referee performing tests based on Gaussian states and repeated measurements 
to allow the referee to estimate accurately the mean and covariance of the resultant state.
The referee's procedure is valid even in the case of limited local-oscillator strength. 

As our procedure is rather complicated 
and involves multiple parties,
we have augmented our analysis by including pseudocode to explain step-by-step instructions on how to complete the procedure. 
Our pseudocode analysis makes clear exactly what is required of each party in the procedure.
This pseudocode description could be a useful approach for describing future continuous-variable quantum-information protocols.

\section{Conclusions}
We have developed continuous-variable quantum mutual information with an external reference system in order to quantify the leakage of information and evaluate the security of continuous-variable quantum secret sharing protocols.
Furthermore, we prove that information leakage arising in the TRS03 scheme
monotonically decreases with reduced squeezing.
In addition, we introduce a certification process for continuous-variable quantum secret sharing in the framework of quantum-interactive proofs and ramp quantum secret sharing schemes.

Pseudocodes have been introduced in order to represent clearly the sequence of steps taken to solve the certification problem.
Subsequently, we provide a practical realization of the certification test using homodyne detection, including a sufficiency condition on the number of experimental runs the referee has to perform.
We prove that the statistical error in the referee’s estimated quantum mutual information scales with the inverse square root of number of rounds.

Our certification procedure assumes the extracted secret states are iid. In reality, this iid property does not hold due to the environmental noises.
Furthermore, in quantum secret sharing schemes, malicious parties might generate highly complicated entanglement among samples to fool the referee.
As a future line of research, it is important to extend our certification procedure to the case of samples
that are not independent and identically distributed.

Another useful avenue of research would be to analyze the effect of systematic errors in the referee's measurement procedure. As a final remark, we emphasize that our certification approach is applicable to certifying other quantum-information protocols such as summoning of quantum information in space time, quantum error correcting codes and quantum teleportation in the framework of quantum-interactive proof systems.

\appendix
\section{Calculation of quantum mutual information}
\label{sec:Appendix} 
The total density operator $\hat{\rho}_T$ of all shares and the reference system after the extraction procedure is
\begin{align}
	\hat{\rho}_\text{T}
		=&\frac{1}{\pi}\int_{\mathbb{R}^{2n+2}}\text{d}^n\bm{x}\text{d}^n\bm{x}'\,\text{d}y\text{d}y' \,\rho\left(y,x_1,y',x_1'\right)\ket{ y}\bra{ y'}\otimes\bigotimes_{\substack{i=1}}^{n}\ket{\xi_i}\bra{\xi_i'}\nonumber \\ 
& \times \exp\left\{ -\sum_{i=1}^{k-1}\left[\frac{y_i^2+y_i'^2}{2a^2}+\frac{a^2\left(z_i^2+z_i'^2\right)}{2}\right]\right\},
\end{align}
where
\begin{align}
\rho\left(y,x_1,y',x_1'\right)=&\exp\bigg[-\frac{\text{e}^{-2|\zeta|}\left(x_1+y\right)^2}{4}-\frac{\text{e}^{2|\zeta|}\left(y-x_1\right)^2}{4}-\frac{\text{e}^{-2|\zeta|}\left(x_1'+y'\right)^2}{4}\nonumber\\
&-\frac{\text{e}^{2|\zeta|}\left(y'-x_1'\right)^2}{4}\bigg].
\end{align}
The joint density operator
\begin{equation}
	\rho'=\bra{\omega' \eta'}\hat{\rho}\ket{\omega \eta},
\end{equation}
of the extracted secret and the reference system is obtained by tracing $\hat{\rho}_T$ over shares $\{2,3,\ldots n\}$. The resultant density matrix is
\begin{align}
\label{eq:position representation}
\rho'\left(\omega,\eta,\omega',\eta'\right)=&\frac{a}{\pi}\sqrt{\frac{1}{a^2+\frac{1}{2}\left(\text{e}^{2|\zeta|}+\text{e}^{-2|\zeta|}\right)v^2}}
\nonumber\\
&\times\exp\Bigg\{\left(-\frac{\text{e}^{-2|\zeta|}}{4}-\frac{\text{e}^{2|\zeta|}}{4}+\frac{\left(-4\text{e}^{-2|\zeta|}+4\text{e}^{2|\zeta|}\right)^2}{4a^2+\frac{\text{e}^{2|\zeta|}}{2}+\frac{\text{e}^{-2|\zeta|}}{2}v^2}\right)\left(\omega^2+\omega'^2\right)
\nonumber\\
&
+\left(\frac{\text{e}^{-4 |\zeta|}+\text{e}^{4 |\zeta|}+2}{16 a^2+8 \left(\text{e}^{2|\zeta|}+\text{e}^{-2|\zeta|}\right) v ^2}-\frac{\text{e}^{-2 |\zeta|}}{4} -\frac{\text{e}^{2 |\zeta|}}{4}\right)\left(\eta^2+\eta'^2
\right)
\nonumber\\
&+\bigg(\frac{\left(\text{e}^{2|\zeta|}+\text{e}^{-2|\zeta|}\right)}{2a^2+2\left(\text{e}^{2|\zeta|}+\text{e}^{-2|\zeta|}\right)v^2}-\frac{\text{e}^{-2|\zeta|}}{2}-\frac{\text{e}^{2|\zeta|}}{2}\bigg)\left(\omega\eta+\omega'\eta'\right)
\nonumber\\
&+\left(\frac{\left(\text{e}^{2|\zeta|}+\text{e}^{-2|\zeta|}\right)}{2a^2+2\left(\text{e}^{2|\zeta|}+\text{e}^{-2|\zeta|}\right)v^2}\right)\left(\eta\omega'+\eta'\omega\right)
\nonumber\\
&+\left(\frac{\text{e}^{-4 |\zeta|} \left(\text{e}^{4 |\zeta|}-1\right)^2}{8 a^2+4 \left(\text{e}^{2|\zeta|}+\text{e}^{-2|\zeta|}\right) v ^2}\right)\omega\omega'
\nonumber\\
&+\left(\frac{\text{e}^{-4 |\zeta|}+\text{e}^{4 |\zeta|}+2}{16 a^2+8 \left(\text{e}^{2|\zeta|}+\text{e}^{-2|\zeta|}\right) v ^2}\right)\eta\eta' \Bigg\},
\end{align}
where~$v^2=\bm{\gamma}_1\odot\bm{\gamma}_1$ for which~$\bm{\gamma}_1=\left(\gamma_{11},\gamma_{12},\ldots,\gamma_{1k-1}\right)$~(\ref{eq:encdectrs032}). Also,~$u^2=\bm{u}\odot\bm{u}$ where $\{\bm{u}_i\}$ are the coefficients of the expansion~$\alpha_j=\sum_{i=2}^{k-1} \bm{u}_i\beta_{ij}$ for which~$j=2,...,k-1.$ Then, by employing Eqs.~(\ref{eq:eqsix}),(\ref{eq:eqseven}), and (\ref{eq:eqeight}), we transform this density matrix into a Wigner function representation~(\ref{eq:eqsix}), namely
\begin{align}
W\left(q_1,p_1,q_2,p_2\right)=&\underbrace{\frac{2 a}{\pi^{2}} \sqrt{\frac{\text{e}^{2 |\zeta|}}{2 a^2 \text{e}^{2 |\zeta|}+\text{e}^{4 |\zeta|}+1}} \sqrt{\frac{a^2 \text{e}^{2 |\zeta|}}{2 a^2 \text{e}^{2 |\zeta|}+u^2 \left(\text{e}^{4 |\zeta|}+1\right)}}}_{N}
\nonumber\\
&\times\exp\bigg\{\underbrace{\left(-\frac{a^2 \left(\text{e}^{4 |\zeta|}+1\right)+2 \text{e}^{2 |\zeta|}}{2 a^2 \text{e}^{2 |\zeta|}+\text{e}^{4 |\zeta|}+1}\right)}_{\beta_1}q_1^2
\nonumber\\
&+\underbrace{\left(-\frac{a^2 \left(\text{e}^{4 |\zeta|}+1\right)}{2 a^2 \text{e}^{2 |\zeta|}+\text{e}^{4 |\zeta|}+1}\right)}_{\beta_2}q_2^2+\underbrace{\left(\frac{2 a^2 \left(\text{e}^{4 |\zeta|}-1\right)}{2 a^2 \text{e}^{2 |\zeta|}+\text{e}^{4 |\zeta|}+1}\right)}_{\beta_3}q_1q_2
\nonumber\\
&+\underbrace{\left(-\frac{a^2 \left(\text{e}^{4 |\zeta|}+1\right)+2 u^2 \text{e}^{2 |\zeta|}}{2 a^2 \text{e}^{2 |\zeta|}+u^2 \left(\text{e}^{4 |\zeta|}+1\right)}\right)}_{\gamma_1}p_1^2
\nonumber\\
&+\underbrace{\bigg(-\frac{a^2 \left(\text{e}^{4 |\zeta|}+1\right)}{2 a^2 \text{e}^{2 |\zeta|}+u^2 \left(\text{e}^{4 |\zeta|}+1\right)}\bigg)}_{\gamma_2}p_2^2
\nonumber\\
&+\underbrace{\left(-\frac{2 a^2 \left(\text{e}^{4 |\zeta|}-1\right)}{2 a^2 \text{e}^{2 |\zeta|}+u^2 \left(\text{e}^{4 |\zeta|}+1\right)}\right)}_{\gamma_3}p_1p_2
\bigg\}.
\end{align}
By using Eq.~(\ref{eq:eqnine}),
this Wigner function  is employed to derive the generic elements of the covariance matrix $\bm{V}$ corresponding to the joint reference and extracted-secret state.
The elements of $\bm{V}$ are
\begin{subequations}
\begin{align}
V_{12}&=V_{21}=V_{14}=V_{41}=V_{23}=V_{32}=V_{34}=V_{43}=0,\\
V_{11}&=N\frac{2\pi^2}{\beta_2^{1/2}\left(\beta_1-\frac{\beta_3^2}{4\beta_2}\right)^{3/2}\left(\gamma_1\gamma_2-\frac{\gamma_3^2}{4}\right)^{1/2}}, \\
V_{13}&=N\frac{\pi^2\beta_3}{2\left(\beta_1\beta_2-\frac{\beta_3^2}{4}\right)^{3/2}\left(\gamma_1\gamma_2-\frac{\gamma_3^2}{4}\right)^{1/2}}=V_{31},\\
V_{22}&=N\frac{2\pi^2}{\gamma_2^{1/2}\left(\gamma_1-\frac{\gamma_3^2}{4\gamma_2}\right)^{3/2}\left(\beta_1\beta_2-\frac{\beta_3^2}{4}\right)^{1/2}},\\
V_{24}&=N\frac{\pi^2\gamma_3}{2\left(\gamma_1\gamma_2-\frac{\gamma_3^2}{4}\right)^{3/2}\left(\beta_1\beta_2-\frac{\beta_3^2}{4}\right)^{1/2}}=V_{42},\\
V_{33}&=N\frac{2\pi^2}{\beta_1^{1/2}\left(\beta_2-\frac{\beta_3^2}{4\beta_1}\right)^{3/2}\left(\gamma_1\gamma_2-\frac{\gamma_3^2}{4}\right)^{1/2}},\\
V_{44}&=N\frac{2\pi^2}{\gamma_1^{1/2}\left(\gamma_2-\frac{\gamma_3^2}{4\gamma_1}\right)^{3/2}\left(\beta_1\beta_2-\frac{\beta_3^2}{4}\right)^{1/2}}.\\
\end{align}
\end{subequations}
The covariance matrix of the extracted secret and reference system denoted by~$V_\text{S}$ and~$V_\text{R}$ are 
\begin{equation}
\label{eq:vrs}
\bm{V}_\text{S}=\begin{pmatrix}
V_{11}&V_{12}\\
V_{21}&V_{22}
\end{pmatrix},\qquad
\bm{V}_\text{R}=\begin{pmatrix}
V_{33}&V_{34}\\
V_{43}&V_{44}
\end{pmatrix}.
\end{equation}
Also the joint covariance matrix of the extracted secret  and reference system is 
\begin{equation}
\bm{V}_{\rho^{\text{RS}}}=\begin{pmatrix}
V_{ij}
\end{pmatrix}.
\end{equation}
For convenience, let us also define
\begin{equation}
	\bm{C}:=\begin{pmatrix}
V_{13}&V_{14}\\
V_{23}&V_{24}
\end{pmatrix}.
\end{equation}
Using Eq.~(\ref{eq:eqtenfour}), symplectic eigenvalues of $\bm{V}_\text{S}$ and $\bm{V}_\text{R}$ denoted by $\nu_\text{S}$ and $\nu_\text{R}$ are
\begin{equation}
\nu_\text{R}=\sqrt{\det{\bm{V}_\text{R}}},\qquad\nu_\text{S}=\sqrt{\det{\bm{V}_\text{S}}},
\end{equation}
for which~$\bm{V}_\text{S}$ and~$\bm{V}_\text{R}$ are defined in Eq.~(\ref{eq:vrs}).
Symplectic eigenvalues of $\bm{V}_{\rho^\text{RS}}$ denoted by $\nu_{\pm}$ is calculated using Eq.~(\ref{eq:eqonesix}), therefore,
\begin{equation}\label{eq:2symplecticeigenvalues}
\nu_\pm=\sqrt{\frac{\Delta\pm\sqrt{\Delta^2-4\det \bm{V}_{\rho^\text{RS}}}}{2}},
\end{equation}
where $\Delta=\det{\bm{V}_\text{S}}+\det{\bm{V}_\text{R}}+2\det{\bm{C}}$.
\bibliography{ref}
\end{document}